\documentclass[11pt,reqno]{amsart}

\usepackage[T1]{fontenc}
\usepackage[utf8]{inputenc}
\usepackage{lmodern}
\usepackage{microtype}
\usepackage{amsmath,amssymb,amsfonts,amscd,mathtools}
\usepackage{latexsym,array,hhline,xcolor,graphicx,stmaryrd}
\usepackage{enumitem}
\usepackage[a4paper,margin=1in]{geometry}
\usepackage[numbers,square,sort&compress]{natbib}
\usepackage[colorlinks=true,citecolor=blue,linkcolor=blue,urlcolor=blue]{hyperref}
\usepackage[nameinlink,noabbrev]{cleveref}

\newcommand{\doi}[1]{\href{https://doi.org/#1}{doi:\nolinkurl{#1}}}
\newcommand{\isbn}[1]{\href{https://search.worldcat.org/isbn/#1}{ISBN~\nolinkurl{#1}}}

\hypersetup{
  pdftitle={On convergence of the Mayer problems arising in the theory of financial markets with transaction cost},
  pdfauthor={Yuri Kabanov and Artur Sidorenko},
  pdfkeywords={Markets with transaction cost, portfolio optimization, Mayer's problem, weak convergence}
}

\allowdisplaybreaks
\theoremstyle{plain}
\newtheorem{theo}{Theorem}[section]

\newtheorem{prop}[theo]{Proposition}

\newtheorem{lemm}[theo]{Lemma}

\newtheorem{coro}[theo]{Corollary}

\theoremstyle{definition}

\theoremstyle{remark}

\newtheorem{remark}[theo]{Remark}

\numberwithin{equation}{section}

\newcommand\cA{\mathcal{A}}
\newcommand\cE{\mathcal{E}}

\newcommand\cF{\mathcal{F}}
\newcommand\cG{\mathcal{G}}

\newcommand\cL{\mathcal{L}}
\newcommand\cB{\mathcal{B}}

\newcommand\cM{\mathcal{M}}
\newcommand\cX{\mathcal{X}}
\newcommand\cD{\mathcal{D}}

\newcommand\cO{\mathcal{O}}
\newcommand\e{{\varepsilon}}

\newcommand{\wh}{\widehat}

\newcommand{\zs}[1]{{\mathchoice{#1}{#1}{\lower.25ex\hbox{$\scriptstyle#1$}}
{\lower0.25ex\hbox{$\scriptscriptstyle#1$}}}}

\newcommand\beq{\begin{equation}}
\newcommand\eeq{\end{equation}}
\newcommand\bea{\begin{align}}
\newcommand\eea{\end{align}}
\newcommand\bean{\begin{align*}}
\newcommand\eean{\end{align*}}
\newcommand\beal{\begin{align}}
\newcommand\eeal{\end{align}}

\def\bbb{{\mathbb B}}
\def\bbr{{\mathbb R}}
\def\N{{\mathbb N}}
\def\bbn{{\mathbb N}}

\def\E{{\bf E}}
\def\P{{\bf P}}
\def\Q{{\bf Q}}

\def\F{{\bf F}}

\def\M{{\bf M}}

\def\R{{\bf R}}

\def\build #1_#2{\mathrel{\mathop{\kern 0pt #1}\limits_{#2}}}
\newcommand\1{{\bf 1}}

\title[On convergence of the Mayer problems]{On convergence of the Mayer problems arising in the theory of financial markets with transaction cost}

\author{Yuri Kabanov}

\author{Artur Sidorenko}

\subjclass[2020]{60G44, 91G10, 91B16}
\keywords{Markets with transaction cost, portfolio optimization, Mayer's problem, weak convergence}

\begin{document}

\begin{abstract}
The geometric approach to financial markets with proportional  transaction cost prescribes to imbed a specific model (of  stock market, of  currency market etc.), usually given in a parametric form,
into a natural framework  defined by the two random processes, $S$ and $K$. The first one, $d$-dimensional, models the price evolution of  basic securities while the second one, cone-valued, describes the evolution of the solvency  set.   It happened that  the fundamental  questions --- no-arbitrage criteria, hedging problems, portfolio optimization --- can be studied in this general setting  opening the door to set-valued techniques. In this note we explore, in such a general framework,  the stochastic Mayer control problem, consisting in the  maximization   of the  expected utility of the portfolio terminal wealth.  We get results on  continuity of the optimal value and the optimal control under price approximations in a general multi-asset framework described by the geometric formalism.
\end{abstract}

\maketitle

\begin{center}
\small
\begin{minipage}{0.88\textwidth}
\noindent\textbf{Yuri Kabanov.} Lomonosov Moscow State University and ``Vega'' Institute Foundation, Moscow, Russia, and Universit\'e de Marie et Louis Pasteur, Laboratoire de Math\'ematiques, UMR CNRS 6623, Besan\c{c}on, France.\\
\texttt{ykabanov@univ-fcomte.fr}

\medskip
\noindent\textbf{Artur Sidorenko.} Lomonosov Moscow State University and ``Vega'' Institute Foundation, Moscow, Russia.\\
\texttt{sidorenkoap@my.msu.ru}
\end{minipage}
\end{center}

\section{Introduction}

This work  is inspired by the paper of Bayraktar, Dolinskyi, and Dolinsky \cite{BDD}  on the  stability of a stochastic Mayer 
problem under modeling errors. The latter consists in   maximization of the expected utility of the terminal value
of a controlled process, usually, a solution to a stochastic equation.  This is a singular control problem that has been studied for more than five decades: the seminal paper by Magill and Constantinides, \cite{MC}, was published in 1976. The theory was placed in a rigorous mathematical framework by Davis and Norton,  \cite{DN}, and by Shreve and Soner \cite{Shreve-Soner} who used the theory of viscosity solutions. It was further developed in various directions by a number of authors; see, for instance, \cite{CK1996,CW2001,Bouchard2002,Guasoni2002,CzichowskySchachermayer2016,CPSY2018}.  A breakthrough in the theory was a note \cite{K} where it was shown  that 
various multi-asset models with friction can be treated within a unified geometric framework using the convex duality in a finite-dimensional space. In that note and  subsequent works a bridge between mathematical economics and mathematical finance was built:   it was recommended to consider simultaneously the  monetary representation of assets and the representation in physical units, which plays a fundamental role in the understanding of arbitrage theory, see the book \cite{KabS} and references therein to works by Kabanov, Stricker, Delbaen, Schachermayer, R\'asonyi and others who contributed to its further development.

The  questions addressed in \cite{BDD} (see also \cite{BCDD2021})   are the  following:   do the optimal values and the
optimal strategies converge when the stock prices converge in law? In other words, is the setting sensitive to errors in the model description? Of course, these questions are of great interest  even for the models without transaction costs, see  \cite{BDG,kreps2020,hubalek2021,BBBE2020}.
 
The paper \cite{BDD} treats the model with only one risky asset. 
It is natural to consider more realistic multi-asset models and, more generally, an abstract mathematical formulation covering various cases and allowing one to reveal the essential features of the problem.

The modern theory of financial markets  provides  a  natural geometric framework allowing  to cover in a uniform way various types of financial markets, e.g., a stock market where  only two actions are allowed  (''buy  stock`` and ''sell  stock``),  a currency market where any currency can be exchanged to any other currency, models where  transactions charge only the bank account, the mentioned model with linear constraints on transactions  etc., see Chapter 3 in the book \cite{KabS}.  

For the reader's convenience we briefly recall  the evolution of the theory and its current state of the art.  At the initial stages,   financial markets with proportional transaction costs were described in  parametric forms: both for the price processes (e.g.,  geometric Brownian motions or geometric L\'evy processes) and for the transaction costs given  by  matrices of transaction costs coefficients or bid-ask spreads etc., see \cite{DN,JS,CK1996}.  Later, 
as we already mentioned, 
it was observed that many typical problems, such as  no-arbitrage conditions, hedging theorems, and portfolio optimization problems,  can be  treated in a more general ``geometric" framework, see, e.g., \cite{K,DPT2001,Bouchard,KL2002,KK2004,DKL2016} and Chapters 3 and 4 in the book \cite{KabS}.  This framework involves, besides the  price process  of basic assets $S=(S_t)$, usually, a semimartingale, also geometric objects --- the  solvency sets forming   an adapted cone-valued process $K=(K_t)$  containing the first orthant. Usually, $K$ is given in monetary units  while the symbol  $\widehat K$ stands to express solvency  in terms of physical units. Further developments of the geometric approach include, for instance, discontinuous prices \cite{CS2006,CO2011} and cumulative prospect theory preferences \cite{CR2020,Chau-Rasonyi2017}.   Related ideas appear in the recent framework based on $\varepsilon$-arbitrage \cite{Acciaio2025}. 

In the important case, corresponding to the transaction costs not evolving in time,  $K$ is just a constant convex cone (the word convex  usually  omitted in the considered context). 
In particular, for models of markets without friction the cone $K=\{x\in \bbr^d\colon {\bf 1}x\ge 0\}$ where ${\bf 1}:=(1,...,1)$. In other words, $K$ is a half-space consisting of vectors whose coordinates sum to a nonnegative number.

If all transactions are charged, the solvency cone $K$ is proper, that is, the intersection $K \cap (-K) = \{ 0 \}$. This  property, referred to as efficient friction,  significantly simplifies the analysis.

Though the solvency cones in financial market models are polyhedral, from the point of view of the optimal control theory, this is not a common constraint. A question arises  whether it can be avoided, at least, without heavy mathematical sophistication. 

The theory  also recommends to work jointly with  the monetary representation of portfolio positions and their representation in physical units.  The first one involves price evolution and  investor's actions (i.e. the portfolio revision).  The representation in terms of physical units has  much simpler dynamics determined only by investor's actions.  In both cases, the dynamics are vector-valued and investor's utility may depend on the whole vector of  positions in various assets. 
Note that the solvency cones in the physical units representation are random set-valued processes, even  if  $K$ is constant in the monetary representation. It is worth mentioning that the two-asset model (i.e., with only one risky asset) is very specific because the only convex cones on the plane are sectors. The multi-asset theory provides a lot of new effects, see Subsection  3.2.3 in \cite{KabS}  for examples of results that cannot  be directly extended to the multi-asset framework.   

We distinguish here the model, the pair $(\cL(Y,S),K)$ and a model realization ${\bf M}={\bf M}(\bbb,Y,S,K)$ where $\bbb=(\Omega, {\cF}, \F=({\cF}_t)_{t \in [0, T]}, \P)$ is a stochastic basis. The information flow, i.e. the filtration ${\bf F^Y}$,  is  generated  by a right-continuous process $Y$ with independent increments (one can consider more general information flow but this makes theory too technical). 
The initial value $Y_0$ can  be arbitrary random variable independent on the increments of $Y$. The $d$-dimensional  price process $S$ is assumed to be  continuous and the joint law of $(Y,S)$ is denoted by $\cL(Y,S)$ and is given in the model. 
 It is important to note that the filtrations generated by $S$ and  $Y$ are not specified and may vary  from one realization to another.
The constant solvency cone $K$ corresponds to  the monetary representation. We define  the set $\cA(x)$ of admissible strategies  where $x\in \bbr$ is an initial value. The admissibility of a strategy means that the 
controlled process, given by a linear dynamics,  does not leave the solvency cone. 
For the detailed description see Section \ref{problem}.  According to a tradition,  we call {\it stochastic  Mayer problem} for ${\bf M}$
the problem of maximization of the terminal wealth. In our case, we 
maximize, over the strategies from $\cA(x)$, the  expected  utility from the  terminal values of the controlled process. 

The question  of interest is the sensitivity (or stability) of the model with respect to the factors. In our case it is the  continuity property  of the Bellman function $u(x, \M)$ with respect to a joint distribution of price process $S$ and process $Y$ generating the underlying filtration. Does the Bellman function $u(x, \M)$ depend only on the measures $\cL(Y,S)$ or on the realization?  
Does a weak  convergence  $\cL(Y^n, S^n)$ to  $\cL(Y, S)$ imply the convergence of $u(x,\M^n)$ to $u(x,\M)$? 
If not, what kind of convergence  is needed to ensure the latter property? 
What  can one say about the convergence of optimal strategies? 

The  above  description of the model does not define it in a unique way and this circumstance seems to be often overlooked. To explain the phenomenon let us consider the simplest case where $S$ is just a Wiener process $W$ generating the filtration ${\bf F}^W$. Let  
$a(s,W)$ be the Tsirelson functional \cite{Tsirelson1975} of the famous  example  where a strong solution of the SDE with drift $a$ depending on the trajectory does not exist. According to the Girsanov theorem the process $\tilde W_t=W_t-\int_0^t a(s,W)ds$ is a Wiener process with respect to the equivalent probability measure $\tilde P$ with the density  given by the stochastic exponential,  that is with  $d\tilde \P/d\P:=\cE_T(a\cdot W)$.  The filtration ${\bf F}^{\tilde W}$ is strictly smaller than  ${\bf F}^W$.  Apparently, without assumptions on the utility function, the optimal value on the basis 
$(\Omega, {\cF}, {\bf F}^{\tilde W}=(\cF^{\tilde W}_t)_{t \in [0, T]}, \tilde \P)$ may be smaller than that on the initial one. 
Note also that we can apply the Girsanov theorem with $\tilde W$ and get another Wiener process with a strictly smaller filtration and so on.  
Also, the initial extension of the filtration may change the optimal solution of the problem, see \cite{CR2015} for a discrete-time example without friction. This example is based on the Choquet integral over a disturbed probability measure. 


Our contribution is twofold.
First, we establish stability of the portfolio optimization problem in a multi-asset market with proportional transaction costs under weak convergence of the underlying price and factor processes. 
Second, we distinguish the market model as a probability distribution of price and factors and the model realization on a particular stochastic basis. We show that the Bellman function of the utility maximization problem in our setting depends only on the model and does not depend on its realization on a particular stochastic basis.

The paper is organized as follows. In Section \ref{problem}, we provide a detailed description of our setup. In Section \ref{section:skorokhod}, we establish  the independence of the Bellman function of a particular realization of $(S,Y)$.
In Section \ref{sec:main_result},  the main continuity theorem is formulated. Further sections contain proofs of this theorem. 
The proof is split in  three  steps.  In Sections  \ref{sec:lowe} and \ref{sec:upper}, we verify, respectively,  the lower semi-continuity  and upper-semicontinuity properties of the Bellman function and complete the proof of the main result. Auxiliary statements are relegated to Appendix.

\section{Basic concepts}
\label{problem}

The  practical  modeling of any financial market  is based on the observable prices which are causal functions of observable or unobservable  factors randomly evolving and  influencing the  price dynamics.  The transaction costs parameters (charges of the market, taxes etc.) in the majority of cases are given exogenously.   In the simplest case of constant proportional transaction costs they are described by a matrix defining  the solvency cone $K$, which is,  in the geometric approach, the primary object, see Section 3.1 in the book \cite{KabS}.   

We  formalize this  situation in accordance with the  classical probabilistic set-up  as follows.  There is the ``Universe``,  a filtered probability  space, or stochastic basis,  $\bbb:=(\Omega, {\cF}, {\bf F}=(\cF_t)_{t\le T}, \P)$.   The price process $S$ with values in $\bbr^d$, $d\ge 2$, is given by its law and is adapted to a  subfiltration ${\bf F}^Y$ generated by a process $Y$, also given by its law and  describing the unobserved factors. Usually we take ${\bf F}$ equal to ${\bf F}^Y$.

{\sl Example 1.} The Black--Scholes model, ``textbook version'', $S=(S^0,S^1)$.  The bank account is the numeraire, i.e. $S^0=1$.   The price process of risky asset has the law $\cL$ of   geometric Brownian motion with constant parameters $(a, \sigma)$.   
We can ``realize'' the model and define  a stochastic basis with the price process   $S^1=\cE (at+\sigma W)$ where $W$ is a Wiener process;    we can take  $Y=W$ (or  $Y=\xi+W$ where $\xi$ is a random variable independent of $W$). As we already noticed, even in this simple case, the filtrations generated by the Wiener process can vary.

{\sl Example 2.}  The law $\cL$ of the price of the risky asset is the law of conditional geometric Brownian motion with the parameters switched according to a telegraph process, i.e. by a Markov process with two states describing external factors.  Such a model  reflects the existence of two regimes of the environment which can  describe business cycles of the economy. The process modulating  parameters can be represented as a function of a process with independent increments.   

We call a {\it model} the pair   $(\cL(Y,S), K)$ where $\cL(Y,S)$ is a probability distribution of an $m+d$-dimensional  process (i.e., a probability measure in the Skorokhod  space of c\`adl\`ag functions). In this paper we restrict ourselves to the case where  $\cL(S)$  is a probability distribution in the space $C^d[0,T]$ of continuous functions.  The last symbol $K$ denotes  a closed (convex) cone in $\bbr^d$ such that ${\rm int}\, K \supset\bbr^d_+ \setminus  \{0\}$.  
 
We call a {\it realization of the model} the quadruple  $\M=(\bbb, Y,S,K)$ where the law $\cL(Y,S)$ and $K$ satisfy the properties formulated  above and $S$ is measurable with respect to the filtration ${\bf F}^Y$.  We shall call realizations $\M$  and $\M'$ equivalent (in symbols: $\M\sim \M')$  if they originate from the same model. 

\begin{remark} Not every model admits a realization. For example, if $\cL(Y)$ and $\cL(S)$ are the laws of nontrivial processes with independent increments and  $\cL(Y,S)=\cL(Y)\otimes \cL(S)$,  the model does not admit a realization: $S$ cannot be adapted with the filtration ${\bf F}^Y$
generated by $Y$. We do not elaborate further on the suggested formalism because,  in the majority of  cases, the description of a model starts with its particular  realization (``We are given a stochastic basis...'').  We want only to emphasize that the  choice of the latter may influence  the desired result.  As was already explained,  
the Wiener process may have realization with different filtration.  A difference in the information flows may lead to  a different Bellman function in optimal control problems.      
\end{remark} 

In the financial context $S$ describes the price evolution  of $d\ge 2$ basic assets measured in some accounting units. Without loss of generality, we assume, as in the above examples, that $S^1\equiv 1$, that is the first traded asset is chosen as the  numeraire.  The process $S$ can be a strong solution to an SDE.  The cone $K$ is interpreted as the {\it solvency region}. Typically, e.g.,  in  models of  currency markets or stock markets, it is given by matrices of transaction costs coefficients. For example,  by 
the matrix $\Lambda=(\lambda^{ij})$ where $\lambda^{ij}\ge 0$, $\lambda^{ii}=0$, whose  entries  have the following  interpretation:  the one-unit increase  of the position   $i$ requires the  transfer of $1+\lambda^{ij}$  from the position  $j$.  With this parameterization 
\[
\begin{aligned}
K=\Big \{x\in \bbr^d\colon\;&\hbox{there is a matrix $(a^{ij})$ with  $a^{ij}\ge 0$}\\
&\hbox{such that }\ x^i+\sum_{j}(a^{ij}-(1+\lambda ^{ji})a^{ji})\ge 0\ \forall i \Big\}.
\end{aligned}
\]

In the present paper  we assume that  the transaction costs are constant over time and so is the cone  $K$ but in a more general setting the solvency cone may depend on time and even be a cone-valued random process. 
The reader  may always think that $K$ is a polyhedral cone as in the above example (and  in all existing financial models).  Our setting includes a market model without friction where $K=\{x\in \bbr^d\colon {\bf 1}x\ge 0\}$, $ {\bf 1}=(1,\dots,1)$.
   
\smallskip  
Let us fix a model realization and introduce the basic concepts. 

We associate with the cone $K$ and the process $S$ the cone-valued random process  $\widehat K:=K/S$.  This intuitive symbolical notation  means that for each $t$ (and $\omega$ hidden as always) $\widehat K_t:=\varphi_t K$, where 
\[
\varphi_t: (x^1,...,x^d)\mapsto (x^1/S^1_t,\dots,x^d/S^d_t), \quad t\in [0,T]. 
\]

In the theory, the dual cone-valued process $\widehat K^*$ with $\widehat K^*_t := \varphi^{-1}_t K^*$, where $K^* := \{w \in \bbr^d : wx \ge 0\ \forall\, x \in K\} \subset \bbr^d_+$ is the (positive) dual of the cone $K$, plays an important role.

\smallskip
A function $f: \, [0, T] \to \bbr^d$ is $K$-{\it decreasing} if $f_s - f_t \in K$ when $s\le t$. The definition of $K$-increasing function is similar. 

\smallskip

Any $\bbr^d$-valued  ${\bf F}$-adapted c\`adl\`ag $K$-decreasing process of bounded variation will be called {\it control} or {\it strategy}. For a control  $B$ and  $x \in K$ we define the process $\widehat V=\widehat V^{x,B}$ with  the components $\widehat V^i=x^i+ (1/S^i)\cdot B^i$ and the process 
$V=\varphi^{-1}\widehat V$ with  
\[
V^i=S^i\widehat V^i=S^i(x^i+ (1/S^i)\cdot 
B^i).
\] 
If $S$ is a semimartingale, the product formula implies that $V^i=x^i+V^i\cdot L^ i+B^i$
where $L^i :=  (1/S^i) \cdot S^i$. Alternatively, $V^i=x^i+\widehat V^ i\cdot S^ i+B^i$. Here and in the sequel we use the standard notation ``$\cdot$'' for the integrals. 

The processes $\widehat V^{x,B}$ and $V^{x,B}$  in the context of specific financial models describe the evolutions of investor's positions in each of $d$ assets where $x$ is  the initial value $x$ and $B$ is the control. Consistently with the introduced notations,  $\widehat V^{x,B}$ presents the dynamics  in terms of the  physical units and $V^{x,B}$ gives the description   in monetary value, i.e. in terms of the numeraire. 

Define 
the set $\cA(x)$ of {\it admissible strategies} as the set of the controls $B$ such that  
$\widehat V^{x,B}\in \widehat K$ (up to a negligible set),  that is,  $\widehat V^{x,B}_t\in {\widehat K_t}$ (a.s.)
for all $t\in [0,T]$. It is easy to see that  $\cA(x)$ is a convex set. The sets $\cA(x)$ have the following properties: $\cA(y)\supseteq \cA(x)$ if $y-x\in K$,  $\cA(\lambda x)=\lambda \cA(x)$  $\forall\,\lambda>0$, and  
\[
\alpha \cA(x) +(1-\alpha)\cA(y)\subseteq \cA(\alpha x+ (1-\alpha)y),  \quad \forall\, \alpha\in [0,1].
\]

In models with transaction costs it seems more natural to consider the utility of the terminal value of the portfolio process in physical units rather than in terms of the numeraire \cite{CS2006,CO2011}. 
With this remark we formulate  the optimal control problem as follows:  to maximize over $\cA(x)$ the expected utility   $\E [U(V^{x,B}_T, S)]$, i.e. to find   
the Bellman function 
\beq
\label{u}
u(x,\M):= \sup_{B \in \cA(x)}\E [U(V^{x,B}_T, S)] 
\eeq
where the utility function $U:K \times C^d[0,T]\to \bbr_+$ is concave  and increasing in the first argument with respect to the componentwise partial ordering in $\bbr^d$  and continuous in the second argument. Note that we restrict ourselves to positive utility functions though  extensions for more general case are possible.
Further assumptions  will be introduced later.

\section{Structural properties of the Bellman function}
 \label{section:skorokhod}

In this section, we study properties of the Bellman function with respect to the model realization.
We consider the following problem. Let $\M$ be a realization of a model. Without loss of generality, one may assume that the probability space $\Omega$ supports a random variable $\xi$ independent of $\cF^{Y}_T$ and  uniformly distributed in $[0, 1]$. 
We define a ``randomized'' realization $\tilde \M := (\tilde \bbb, Y,S,K)$ and  $\tilde \bbb := (\Omega, \cF, \tilde {\bf F}, \P)$, where  $\tilde {\bf  F} = (\tilde \cF_t)_{t \le  T}$ with $\tilde \cF_t = \sigma \{  \cF_t, \xi \}$. The question of interest is whether the Bellman functions satisfy $u(x, \M) = u(x, \tilde \M)$.

Following \cite{Chau-Rasonyi2017}, we put 
\[
Y^{t} := Y \1_{\llbracket 0, t \llbracket} + Y_t \1_{\llbracket t, \infty \llbracket}  
\]
and
\[
_t \! Y := (Y- Y_t) \1_{\llbracket t, \infty \llbracket}.  
\]

\begin{theo}
\label{theo:rand}
For every $x \in {\rm int} \, K$
\[
u(x, {\bf M}) = u(x, {\bf \tilde M}).  
\]
\end{theo}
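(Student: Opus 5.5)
The inequality $u(x,\M)\le u(x,\tilde\M)$ is immediate: $\cF_t\subseteq\tilde\cF_t$, so every $\bf F$-adapted admissible control is $\tilde{\bf F}$-adapted, and the admissibility constraint $\widehat V^{x,B}\in\widehat K$ and the objective do not involve the filtration. The work is in the reverse inequality. I will take ${\bf F}={\bf F}^Y$, the case singled out in Section~\ref{problem}. The idea is to show that randomizing with the independent variable $\xi$ cannot help, by freezing $\xi$ and averaging over it.

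\emph{Step 1: freeze $\xi$.} Take $\tilde B\in\cA(x)$ for $\tilde\M$. Since $\tilde\cF_t=\cF^Y_t\vee\sigma(\xi)$, I would represent $\tilde B$ as $\tilde B_t=G(t,Y^{t},\xi)$ with $G$ jointly measurable. This uses a Doob-type functional representation for c\`adl\`ag adapted processes together with the stopped process $Y^t$ introduced above. Then for each fixed $z\in[0,1]$ I set $B^z_t:=G(t,Y^t,z)$, which is ${\bf F}^Y$-adapted.

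\emph{Step 2: the frozen controls are admissible.} Because $\xi$ is independent of $\cF^Y_T$, and because $S$, $\widehat K$ and $\widehat V$ are functionals of $Y$ and $B$, Fubini's theorem applies. The a.s.\ properties of $\tilde B$ (c\`adl\`ag, bounded variation, $K$-decreasing, and $\widehat V^{x,\tilde B}_t\in\widehat K_t$ for all $t$) then hold for $B^z$ for Lebesgue-a.e.\ $z$. For the constraint I would pass to a countable dense set of times, using right-continuity and the closedness of $K$. The stochastic integral $(1/S^i)\cdot B^i$ is pathwise Lebesgue--Stieltjes because $B$ has bounded variation, so $\widehat V^{x,B^z}$ is the same path functional evaluated at $z$. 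Hence $B^z\in\cA(x)$ for a.e.\ $z$.

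\emph{Step 3: average.} By independence and Fubini,
\[
\E\, U(V^{x,\tilde B}_T,S)=\int_0^1 \E\, U(V^{x,B^z}_T,S)\,dz\le u(x,\M).
\]
This holds because $U\ge0$, so there are no integrability issues. Taking the supremum over $\tilde B$ gives $u(x,\tilde\M)\le u(x,\M)$.

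I expect the main obstacle to be Step~1: producing a jointly measurable version $G$ such that $G(\cdot,Y^{\cdot},z)$ is adapted and c\`adl\`ag for each $z$, not merely at each fixed $t$. My first attempt would be to obtain $G$ at rational times via Doob's lemma and then define $B^z$ as a right-limit regularization along the rationals. Pathwise properties (monotonicity in the $K$-order, bounded variation) are preserved a.e.\ in $z$ by the Fubini argument. If that proves awkward, the fallback uses the hypothesis $x\in{\rm int}\,K$: approximate $\tilde B$ by piecewise-constant controls on a dyadic grid, with a small shift of the initial endowment inside ${\rm int}\,K$ to keep them admissible. Freezing $\xi$ is then trivial for such simple controls, and one recovers the bound in the limit using $\cA(x)\subseteq\cA(y)$ for $y-x\in K$ and concavity of $u(\cdot,\M)$, which makes it continuous on ${\rm int}\,K$.
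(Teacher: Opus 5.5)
Your proposal is correct and is essentially the paper's proof: write the control as a Borel function of $(Y,\xi)$, freeze $\xi=a$, use independence and Fubini to transfer adaptedness, $K$-monotonicity and the solvency constraint to a.e.\ $a$, and then average the objective over $a$. The one technical difference is that the paper uses the path-valued Doob representation $B=f(Y,\xi)$ with $f$ valued in $\cD^d_T$, so c\`adl\`ag paths are automatic and only adaptedness has to be checked on a countable dense set of times (via $f_t(Y^t,\xi)$) and then extended by right-continuity of the paths and of ${\bf F}$. Your right-limit regularization along the rationals needs that same right-continuity of the filtration, since it is a priori only $\cF^Y_{t+}$-measurable, and the piecewise-constant fallback is unnecessary.
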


\begin{proof}
    The  inequality $u(x, {\bf M}) \leq u(x, {\bf \tilde M})$ is immediate, so it remains  to prove the converse inequality.
    Let $B \in \cA(x, {\bf \tilde M})$. By the Doob theorem, there exists a Borel function $f: \cD^m_T \times [0, 1] \mapsto \cD^d_T$ such that  the process $B = f(Y, \xi)$. Our goal is to verify that for a.a. $a \in [0, 1]$, the process $\bar B(a) := f(Y, a) \in \cA(x, {\bf M})$. 
    
    Fix $t \in [0, T]$. It is easily seen that $B_t = \pi_t(f(Y, \xi))$ where $\pi_t(x) = x_t$. On the other hand, $B_t$ is $\tilde \cF_t$-measurable implying $\pi_t(f(Y, \xi)) = f_t(Y^t, \xi)$ for some Borel function $f_t$. By independence of $Y$ and $\xi$, and by the Fubini theorem, $f_t(Y^t, a) = \pi_t(f(Y, a))$ for a.a. $a \in [0, 1]$. It follows that for a dense countable set $I \subset [0, T]$ with $\{0, T \} \in I$, r.v. $\bar B_t$ are $\cF_t$-measurable for a.a. $a \in [0, 1]$. Since $B_t = \lim_{I \ni s \to t+} B_s$ for $t \neq T$ and ${\bf F}$ is right-continuous, $\bar B(a)$ is adapted for a.a. $a \in [0, 1]$. On the other hand, 
    \[
    \1_{\wh V^{x,B}_t \in \wh K_t, \, \forall t \in [0, T] } = g(Y, \xi)
    \]
    for some Borel function $g$. Immediately, $g(Y, \xi) = 1$ a.s. Again, due to the Fubini theorem, $g(Y, a) = 1$ for a.a. $a \in [0, 1]$. Thus, we have established that $\bar B(a) \in \cA(x, {\bf M})$ for a.a. $a \in [0, 1]$.

    Finally, fix $\varepsilon > 0$. Then, for some strategy $B \in \cA(x, {\bf \tilde M})$,
    \begin{multline*}
    u(x, \tilde {\bf M}) - \varepsilon \leq \E U(V^{x, B}_T, S) = \E \left[ \E \left[ U(V^{x, B}_T, S) \, | \, \xi \right] \right] = \\ 
     \int_{[0, 1]} \E  U(V^{x, \bar B(a)}_T, S)  \, da \leq \int_{[0, 1]} u(x, {\bf M}) da = u(x, {\bf M}).
    \end{multline*}
    As $\varepsilon$ is arbitrary, the assertion follows.
\end{proof}

\begin{remark}
    \label{rem:random_max}
    If $B^\dagger \in \cA(x, {\bf \tilde M})$ attains the maximal value $u(x, {\bf \tilde M})$ then the process $\bar B^\dagger(a)$ attains the maximal value $u(x, {\bf M})$ for almost all $a \in [0, 1]$.
\end{remark}

Now we are interested in the following question: do the Bellman functions of two equivalent model realization $\M_1$ and  $\M_2$ coincide? In our framework the answer is positive. To prove this statement, we introduce model realizations with initial randomization of the filtration. 

\begin{theo}
\label{theo:indep}
Consider two model realizations ${\bf M}^1 \sim {\bf M}^2$. Then the following holds for all $x \in K$.
\beq
u(x, {\bf M}^1) = u(x, {\bf M}^2).
\eeq
\end{theo}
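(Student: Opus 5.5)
By symmetry it suffices to show $u(x,{\bf M}^1)\le u(x,{\bf M}^2)$. The plan is to transfer an admissible strategy from ${\bf M}^1$ to ${\bf M}^2$ through a functional representation in terms of $Y$, so that everything depends only on the common law $\cL(Y,S)$.

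Fix $\varepsilon>0$ and $B\in\cA(x,{\bf M}^1)$ with $\E\, U(V^{x,B}_T,S)\ge u(x,{\bf M}^1)-\varepsilon$. As in the proof of Theorem \ref{theo:rand}, the filtration is generated by $Y$, so the Doob theorem gives a Borel map $f:\cD^m_T\to\cD^d_T$ with $B=f(Y^1)$ a.s. Likewise, since $S$ is ${\bf F}^Y$-adapted and continuous, $S^1=h(Y^1)$ for a Borel $h$. Because $\cL(Y^1,S^1)=\cL(Y^2,S^2)$, the pair $(Y^1,h(Y^1))$ has the same law as $(Y^2,S^2)$, and therefore $S^2=h(Y^2)$ a.s.

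Next we check that $B':=f(Y^2)$ is a strategy on ${\bf M}^2$. For each $t$ in a countable dense set $I\ni T$, $\cF^{Y}_t$-measurability of $B_t$ gives $\pi_t(f(Y^1))=f_t((Y^1)^t)$ a.s. for some Borel $f_t$. This is a property of the law of $Y^1$, so it transfers to $Y^2$: $B'_t=f_t((Y^2)^t)$ a.s. is $\cF^{Y^2}_t$-measurable. Right-continuity and right-continuity of the filtration then give adaptedness on all of $[0,T]$, exactly as in Theorem \ref{theo:rand}. The other requirements are expressed by measurable subsets of path space:
\[
\{B \text{ c\`adl\`ag, $K$-decreasing, of bounded variation}\}\quad\text{and}\quad \{\wh V^{x,B}_t\in\wh K_t\ \forall t\},
\]
where $\wh V^{x,B}=x+(1/S)\cdot B$ is a pathwise Stieltjes integral, since $S$ is continuous and $B$ has finite variation. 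Each indicator equals $g(Y^1,S^1)$ for a Borel $g$ and equals $1$ a.s., so $g(Y^2,S^2)=1$ a.s. Hence $B'\in\cA(x,{\bf M}^2)$.

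Finally, $V^{x,B}_T=\Phi(Y^1,S^1)$ for the Borel map $\Phi(y,s)=s_T\circ(x+(1/s)\cdot f(y))$, with componentwise product. Equality of laws then gives
\[
\E\, U(V^{x,B'}_T,S^2)=\E\, U(V^{x,B}_T,S^1)\ge u(x,{\bf M}^1)-\varepsilon .
\]
Letting $\varepsilon\to0$ yields the inequality.

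The main obstacle is the measurability step: showing that the Stieltjes integral $(1/s)\cdot f(y)$, and the event ``$B$ is $K$-decreasing and $\wh V$ stays in $\wh K$ for all $t$'', are Borel functionals on Skorokhod space. I would handle this by restricting to rational times, using right-continuity and closedness of $K$, and by approximating the integral with Riemann sums along dyadic partitions, which converge pathwise because $s$ is continuous. Adaptedness is the other delicate point; it is handled by expressing $\cF^Y_t$-measurability through the stopped process $Y^t$, as in Theorem \ref{theo:rand}. If the filtrations are taken to be right-continuous augmentations, an additional null-set argument is needed, but completion does not affect the Doob representation up to a.s. equality.
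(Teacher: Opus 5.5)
Your proof is correct, but it takes a different route from the paper's.

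\textbf{The paper's route.} The paper first passes, via Theorem~\ref{theo:rand}, to the randomized realizations $\tilde{\bf M}^1,\tilde{\bf M}^2$. It writes $B^1=f(Y^1,\xi^1)$ and sets $B^2=f(Y^2,\xi^2)$. It checks the $K$-monotonicity and the solvency constraint exactly as you do, by transferring full-measure Borel sets of path space under equality of laws. The difference is in the adaptedness step. The paper does not transfer a representation of $B_t$ through the past of $Y$. Instead, it uses equality of laws to get independence of $B^2_t$ from the future increments ${}_tY^2$. It then applies an independence lemma of Chau--R\'asonyi to conclude that $B^2_t$ is $\sigma\{Y^{2,t},\xi^2\}$-measurable. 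Finally, it removes the randomization again with Theorem~\ref{theo:rand}.

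\textbf{Your route.} Your key observation is that $\{y:\pi_t(f(y))=f_t(y^t)\}$ is a Borel set of full $\cL(Y^1)$-measure, hence of full $\cL(Y^2)$-measure. This gives a shorter and more general argument:
\begin{enumerate}
\item[(a)] it uses neither the independent increments of $Y$ nor the randomization;
\item[(b)] it needs no appeal to Theorem~\ref{theo:rand}, which is formulated only for $x\in{\rm int}\,K$, so it covers all $x\in K$ as the statement requires;
\item[(c)] it works for each fixed $t$ directly, so your countable dense set and the right-continuity step are harmless but unnecessary.
\end{enumerate}

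\textbf{What the paper's route buys.} It gives a criterion for adaptedness expressed purely through joint laws, namely independence from the future increments ${}_tY$. This criterion does not presuppose that the strategy is a functional of $Y$ with a known adapted representation. That is why the authors can invoke the same argument ``verbatim'' in the proof of Theorem~\ref{theo:main}. There the candidate strategy arises as a weak limit, and no representation $B_t=f_t(Y^t,\xi)$ is available to transfer.
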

\begin{proof}
Thanks to Theorem \ref{theo:rand}, it suffices to verify that for the randomized realizations
\beq
u(x, \tilde \M^1) = u(x, \tilde \M^2).
\eeq
We introduce randomized model realizations $\tilde \M^1$ and $\tilde \M^2$.
Let $B^1 \in \cA(x, {\bf M}^1)$. Recall that the paths of $B^1$ belong to $D([0, T], \bbr^d)$. Due to Lemma 31 of \cite{Chau-Rasonyi2017}, the strategy can be represented as a Borel function of $Y^1$ and the r.v. $\xi^1$ as follows: $B^1 = f(Y^1, \xi^1)$. Put $B^2 := f(Y^2, \xi^2)$. Our aim is to verify that $B^2 \in \cA(x,  {\bf \tilde M}^2)$, where ${\bf \tilde M}^2$ is a model with additional randomization. $B^2$ is $K$-decreasing since $\cL(B^1)$ assigns probability one to the set of $K$-decreasing c\`adl\`ag paths $H_K$. It can be easily established that $H_K$ is a Borel set in $D([0,T]; \bbr^d)$. Analogously, $\cL(S, B^1)$ assigns probability one to a set 
\[
\{ (x, b) : \, x \in C^d([0, T]) \times H_K, \, x_t > 0, \quad  1/x \cdot b_t \in K/ x_t \, \forall t \in [0, T]\},
\]
implying that $B^2$ satisfies the admissibility property. 

It remains to establish that $B^2_t$ is $\tilde \cF^2_t$-measurable. To this end, note that
\[
\cL(B^2_t, Y^2_u - Y^2_t) = \cL(B^2_t) \otimes \cL(Y^2_u - Y^2_t)
\]
for any pair $0 \leq t \leq u \leq T$. 
Then $B^2_t$ is $\sigma\{ Y^{2, t}, \,_t \! Y^2, \xi^2 \}$-measurable and independent of $\,_t \! Y^2$. Besides, $\,_t \! Y^2$ is independent of $(Y^{2, t}, \, \xi^2)$. Due to Lemma 29 of \cite{Chau-Rasonyi2017}, $B^2_t$ is $\sigma\{ Y^{2, t}, \xi^2 \}$-measurable. 
As the choice of $B^1$ is arbitrary,
\[
u(x, {\bf M}^1) \leq u(x, {\bf \tilde M}^2).
\]
By virtue of Theorem \ref{theo:rand}, 
\[
u(x, {\bf M}^1) \leq u(x, {\bf M}^2).
\]
By swapping the positions of ${\bf M}^1$ and ${\bf M}^2$, we arrive at the required assertion.
\end{proof}

\section{Main result}
 \label{sec:main_result}
Our aim is to provide sufficient conditions on convergence of a sequence of models $\mu^n$ to $\mu$ to ensure convergence of the corresponding Bellman functions $u^n$ to $u$. Due to Theorem \ref{theo:indep}, one may describe this convergence in terms of selected realizations $\M^n$ and $\M$.

In the sequel the superscript  $n$ will indicate objects  related with the $n$th model or model realization, e.g., for the value function we write    
\beq
u(x,{\bf M}^n) := \sup_{B \in {\cA}(x,{\bf M}^n)}\E [U( V^{x,B}_T, S^n)].
\eeq

\smallskip
{\bf A.1.} 
There are two continuous functions $m_i:[0,1]\to \bbr_+$ satisfying $m_i(0)=0$, $i=1,2$, and an integrable random variable  $\zeta\ge 0$ such that for all $x\in K$ and $\alpha \in (0,1)$
\beq 
\label{Ua}
U((1-\alpha)x,S) \ge  (1-m_1(\alpha))U(x,S)-m_2(\alpha)\zeta .
\eeq

\smallskip

The  power function 
    \[
        U(x,s) = \frac{1}{1 - \gamma} \ell(x)^{1-\gamma}, \qquad \gamma \in (0,1),
    \]
where function $\ell$ is defined in \eqref{eq:liquidation},
satisfies this assumption with functions $1 - m_1(\alpha) = (1 - \alpha)^{1 - \gamma}$,  $m_2(\alpha) = 0$ and r.v. $\zeta = 0$.

\smallskip

{\bf A.2.}
The sequence of probability measures $\cL (Y^n, S^n \, | \, \P^n) \to \cL(Y, S \, | \, \P)$ in the Skorokhod $J_1$ topology on $D([0,T]; \bbr^{m+d})$. 

\smallskip

In this setting, we assume that the solvency cone $K$ is the same for all models. The continuity of the Bellman functions with respect to the transaction costs is left for future research. 

\smallskip
{\bf A.3.}
For every $x \in {\rm int} \, K$, the  sets   of random variables 
\[
\{ U(V^{x, B^n}_T, S^n) \colon n \in \bbn, \, B^n \in \cA(x, {\bf M}^n)  \}, \qquad \{ U( V^{x, B}_T, S) : \,  B \in \cA(x, {\bf M})  \}
\]
are uniformly integrable. 
\smallskip

Recall that the involved random variables can be defined on different probability spaces but the uniform integrability is a property of distributions. 
Verifying the uniform integrability assumption can be particularly cumbersome. In this regard, the following lemma would be helpful.

\begin{lemm}
\label{lemm:ui}
Suppose that $U$ is upper-semicontinuous with respect to $x$ and there exist $C > 0$, $\gamma \in \,]0, 1[$ and $q > 1 / (1 - \gamma)$ satisfying the following:
\begin{enumerate}
\renewcommand{\labelenumi}{(\roman{enumi})}
\item for all $(x, s) \in K \times \bbr^d_+$,
\[
U(x, s) \leq C\left(1 + \wp^\gamma(x)\right),
\]
where function $\wp$ was defined in \eqref{eq:purchase};
\item there exists a consistent price system $Z$ such that
\[
\E[ \left( Z_T^1 \right)^{1-q}] < \infty.
\]
\end{enumerate}
Then {\bf{A.3}} holds true.
\end{lemm}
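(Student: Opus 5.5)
The plan is to prove a uniform $L^r$-bound with some $r>1$ for the family $\{U(V^{x,B}_T,S)\}$ and then invoke de la Vallée-Poussin. The supermartingale property of the dual pairing with a consistent price system gives a budget constraint, and Hölder's inequality, with the exponent condition $q>1/(1-\gamma)$, converts it into the moment bound. I write it for the model realization $\M$. For the family indexed by $n$, I will have to read hypothesis (ii) as holding for each $\M^n$ with a price system $Z^n$ satisfying $\sup_n \E[(Z^{n,1}_T)^{1-q}]<\infty$ and $\sup_n \E[Z^n_0 x]<\infty$. The statement does not say this explicitly, and it should be recorded as the intended reading; without some such uniformity the claim for the sequence cannot be obtained from the limit model alone.

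\emph{Step 1 (budget constraint).} Let $Z$ be a consistent price system, i.e.\ a martingale with $Z_t\in{\rm int}\,\widehat K^*_t$. For $B\in\cA(x)$, the process $Z\widehat V^{x,B}$ is a nonnegative supermartingale, and so
\[
\E[Z_T\widehat V^{x,B}_T]\le Z_0x .
\]
This follows from the integration by parts formula, since $\widehat V^{x,B}=x+(1/S)\cdot B$ and $Z\,d\widehat V=Z\,d((1/S)\cdot B)\le 0$ because $B$ is $K$-decreasing and $Z/\varphi^{-1}\in K^*$. Positivity holds because $\widehat V^{x,B}\in\widehat K$. This is the standard argument of Chapter 3 of \cite{KabS}.

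\emph{Step 2 (dominating $\wp$ by the dual pairing).} Passing to monetary units, $w:=Z_T\varphi_T/Z^1_T$ lies in $K^*$ with $w^1=1$, and $Z_T\widehat V_T=Z^1_T\,wV_T$. I aim for the estimate
\[
\wp(V_T)\le c\,Z_T\widehat V_T/Z^1_T
\]
with a deterministic constant $c$, using $V_T\in K$ and the definition of $\wp$ in \eqref{eq:purchase}. This is the step I expect to be the main obstacle. For a fixed $w\in{\rm int}\,K^*$, compactness of a base of $K$ gives $wx\ge c_w|x|$ on $K$, and $\wp(x)\le C|x|$. However, the constant must not depend on the random $w$. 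I would first try to exploit the structure of $\wp$: probably $\wp(x)\le wx$ for all normalized $w\in K^*$ is false in general, and only $\ell(x)\le wx$ holds. If the uniform constant fails, I would strengthen the choice of $Z$ so that it takes values in a fixed compact sub-cone of ${\rm int}\,\widehat K^*$ after normalization. Upper semicontinuity of $U$ in $x$ is used only to ensure measurability and finiteness of $U(V_T,S)$.

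\emph{Step 3 (Hölder).} Put $\eta:=\wp(V^{x,B}_T)$ and $r:=(q-1)/(q\gamma)$. Then $r>1$ exactly because $q(1-\gamma)>1$, and $\gamma r<1$. Writing $\eta^{\gamma r}=(Z^1_T\eta)^{\gamma r}(Z^1_T)^{-\gamma r}$ and applying Hölder with exponents $1/(\gamma r)$ and $1/(1-\gamma r)$ gives
\[
\E\,\eta^{\gamma r}\le \bigl(\E[Z^1_T\eta]\bigr)^{\gamma r}\bigl(\E[(Z^1_T)^{-\gamma r/(1-\gamma r)}]\bigr)^{1-\gamma r}.
\]
Since $\gamma r/(1-\gamma r)=q-1$, the second factor equals $\E[(Z^1_T)^{1-q}]^{1-\gamma r}<\infty$. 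By Steps 1--2 the first factor is at most $(cZ_0x)^{\gamma r}$.

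\emph{Step 4 (conclusion).} By (i), $\E\,U(V^{x,B}_T,S)^r\le C'(1+\E\,\eta^{\gamma r})$, and the right-hand side is bounded uniformly in $B\in\cA(x)$, and in $n$ under the uniform reading of (ii). Boundedness in $L^r$ with $r>1$ gives uniform integrability, which is {\bf A.3}.
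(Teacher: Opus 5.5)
Your architecture is the paper's. It uses the supermartingale property of the pairing of $\wh V^{x,B}$ with a consistent price system (Lemma~\ref{lemma:4.1}), then domination of $\wp(V^{x,B}_T)$ by $M_T\wh V^{x,B}_T$ with $M=Z/Z^1$. It then applies H\"older under $\Q=Z^1_T\P$ (your $\gamma r$ is the paper's $1/p$, and your second factor is its $\E[(Z^1_T)^{1-q}]^{1/q}$), and finishes with de la Vall\'ee-Poussin.

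The genuine gap is Step~2, which you leave open, and you have located it correctly. Under (ii) as stated, with $Z$ merely consistent, the bound $\wp(V_T)\le c\,wV_T$ with a deterministic $c$ cannot be derived from $V_T\in K$ and $w\in\Lambda$ alone. Take $d=2$ and $K=\{x\colon x^1+(1-\mu)x^2\ge 0,\ x^1+(1+\lambda)x^2\ge 0\}$, so that $\Lambda=\{(1,s)\colon 1-\mu\le s\le 1+\lambda\}$. The solvent position $v_a=(-(1-\mu)a,a)$ has $\wp(v_a)=(\lambda+\mu)a$ but $(1,s)v_a=(s-1+\mu)a$. Consistency only forces $s_T\in\,]1-\mu,1+\lambda[$; it does not keep $s_T$ away from $1-\mu$.

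The paper handles this step in one line, $\wp(V_T)\le (M_T/S_T)V_T$ ``since $M_T/S_T\in\Lambda$''. With its own dual formula $\wp(x)=\sup_{y\in\Lambda}xy$, that inequality runs the other way: for $w\in\Lambda$ one only has $\ell(x)\le wx\le\wp(x)$. Already $V_T=e_2$ gives $\wp(e_2)=1+\lambda>s=(1,s)e_2$. So the paper does not supply the missing idea either; your suspicion that $\wp(x)\le wx$ fails and only $\ell(x)\le wx$ holds is right.

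Your fallback does close the step, with an explicit constant. Suppose the $Z$ in (ii) is $\e$-uniformly consistent as in {\bf A.4}. For $v\in K$ and $w=Z_T/(Z^1_TS_T)$ one has $wv\ge\e|w||v|\ge\e|v|$, because $|w|\ge w^1=1$. Since $e_1\in{\rm int}\,K$, choose $\delta>0$ with $\bar\cO_\delta(e_1)\subset K$; then $y^1\ge\delta|y|$ for $y\in K^*$, so $|y|\le 1/\delta$ on $\Lambda$ and $\wp(v)\le|v|/\delta$. Hence $c=1/(\e\delta)$, and your Steps~3--4 go through verbatim.

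What this proves, however, is the lemma with (ii) strengthened in two ways. First, $Z$ must be an $\e$-uniformly consistent price system. Second, for the $n$-indexed family in {\bf A.3}, you need $\sup_n\E[(Z^{n,1}_T)^{1-q}]<\infty$; the paper's proof treats a single realization only and silently needs this uniformity too. As a proof of the statement exactly as written, the proposal is incomplete at Step~2. You should state the strengthened hypothesis explicitly rather than leave the step as an open attempt.
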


The proof is given in Appendix. 

\smallskip

A c\`adl\`ag process $Z = (Z_t)_{t \le T}$ is called a strictly {\it consistent price system} if $Z$ is  ${\bf F}$-martingale, $ Z_t / S_t \in {\rm int} \, K^*$ and $Z^{1}_0 = 1$. Let $\varepsilon > 0$;  $Z$ is a $\varepsilon$-{\it uniformly consistent price system} if  $Z$ is consistent price system such  that $ Z_t / S_t \in \varepsilon$-${\rm int} \,  K^*$, where 
\[
\varepsilon\hbox{-}{\rm int} \, K^* = \{ y \in \bbr^d \setminus \{ 0 \} \colon  w y > \varepsilon \, |y|  |w|, \ \forall\, w \in K \}.
\]

\smallskip

{\bf A.4.}
There is  $\varepsilon > 0$ such that for every realization ${\bf M}^n$ there is an $\varepsilon$-uniformly consistent price system $Z^n$ and for the realization ${\bf M}$ there is an $\varepsilon$-uniformly consistent price system $Z$. Furthermore, the sequence of measures $\P^n$ is contiguous with respect to $\Q^n := Z^{n, 1}_T \P^n$. 

\smallskip


The contiguity assumption for convergence problems of financial market models was also considered in \cite{hubalek1998}.

\smallskip

Fix $x \in {\rm int} \, K$. A sequence $B^n \in \cA(x, {\bf M}^n)$ is called asymptotically optimal if
\[
\lim_n \left( u(x, {\bf M}^n) - \E[ U( V^{x, B^n}_T, S^n)]  \right) = 0.
\]

\smallskip

The main results of this paper can be summarized in the following 
\begin{theo}
\label{theo:main}
Let  ${\bf A.1}$ -- ${\bf A.4}$ hold. Then for every $x \in {\rm int }\, K$
\[
\lim_{n \to \infty} u(x, {\bf M}^n) = u(x, {\bf M}).
\]
Suppose that $B^n \in \cA(x, {\bf M}^n)$ is an asymptotically optimal sequence of strategies. Then there is a strategy $B \in \cA(x, \M)$ such that $u(x, {\bf M}) = \E[ U( V^{x, B}_T, S)]$ and a subsequence of laws $\cL(Y^n, S^n, B^n)$ converges to $\cL(Y, S, B)$ in the product of the Skorokhod $J_1$ topology on the first two coordinates and the Meyer--Zheng topology on the third.
\end{theo}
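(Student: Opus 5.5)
The proof splits into two inequalities, $\liminf_n u(x,\M^n)\ge u(x,\M)$ and $\limsup_n u(x,\M^n)\le u(x,\M)$. The upper bound will come from a compactness argument that also produces the optimal strategy $B$. Throughout, Theorem \ref{theo:indep} lets us pick convenient realizations. In particular, by A.2 and the Skorokhod representation theorem, we may assume all $(Y^n,S^n)$ and $(Y,S)$ live on one probability space with $(Y^n,S^n)\to(Y,S)$ a.s. in $J_1$. Since $S$ is continuous, this gives $S^n\to S$ uniformly a.s. The catch is that the filtrations $\F^{Y^n}$ are not those of the limit, so strategies cannot simply be transported.

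\emph{Lower semicontinuity.} Fix $B\in\cA(x,\M)$ and $\alpha\in(0,1)$.

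1. Replace $B$ by a strategy with a margin. Since $x\in{\rm int}\,K$, the strategy $(1-\alpha)B$ started from $x$ keeps $\wh V$ inside $\wh K$ with slack $\alpha x/S$, which is strictly interior. A.1 controls the loss: $\E U((1-\alpha)V^{x,B}_T,S)\ge(1-m_1(\alpha))\E U(V^{x,B}_T,S)-m_2(\alpha)\E\zeta$.

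2. Approximate this strategy by a simple one. Choose finitely many trading dates, with increments $\Delta B_{t_k}=h_k(Y_{s_1},\dots,Y_{s_j})\in -K$ for $s_j\le t_k$, where the $h_k$ are continuous and bounded. This should be possible by a monotone-class or Lusin argument at continuity times of $Y$, while keeping admissibility thanks to the slack from step 1. A.3 and the continuity of $U$ in $s$ control the error in the objective.

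3. Transfer to $\M^n$. Set $B^n:=\sum_k h_k(Y^n_{s_1},\dots)\1_{[t_k,T]}$. This is $\F^{Y^n}$-adapted and $K$-decreasing. Because $S^n\to S$ uniformly and the margin is positive, $B^n$ is admissible in $\M^n$ except on a set of vanishing probability; on that set we stop trading and liquidate into a solvent position, which costs little by A.3. Then $V^{x,B^n}_T\to V^{x,B}_T$ in probability, and by continuity of $U$ together with uniform integrability (A.3), $\liminf_n u(x,\M^n)\ge \E U(V^{x,B}_T,S)-\delta(\alpha)$. Letting $\alpha\to0$ gives the bound.

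\emph{Upper bound and optimal strategy.} Take an asymptotically optimal sequence $B^n$.

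1. Establish tightness of $\cL(Y^n,S^n,B^n)$ in $J_1\times$ Meyer--Zheng. The key is a uniform bound on the total variation of $B^n$. Since $Z^n/S^n\in\varepsilon$-${\rm int}\,K^*$ (A.4), the process $Z^n\wh V^n$ is a nonnegative $\Q^n$-supermartingale, and its decrease dominates $\varepsilon\,|dB^n/S^n|$. Hence $\E_{\Q^n}\mathrm{Var}_T(\wh B^n)\le C\,Z^n_0x$, where $\wh B^n:=(1/S^n)\cdot B^n$. So $\mathrm{Var}_T(\wh B^n)$ is bounded in $\Q^n$-probability, and by contiguity (A.4) also in $\P^n$-probability. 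Multiplying by $S^n$, which is tight in $C$, gives the same for $B^n$. The Meyer--Zheng criterion (bounded expected variation, localized) then yields tightness.

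2. Pass to a convergent subsequence with limit $(Y,S,B)$ and realize it via Skorokhod representation. Check that $B$ is $K$-decreasing and admissible. Both are closed conditions preserved under pseudo-path (MZ) convergence: they hold for a.e. $t$, and then everywhere by right-continuity.

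3. Show that $B$ is adapted to $\F^{Y,\xi}$ for the randomized filtration. I expect this to be the main obstacle. The plan is to prove that $B_t$ is conditionally independent of ${}_t Y$ given $Y^t$, by passing the independence of the future increments from $(Y^{n,t},B^n_t)$ to the limit. Then apply Lemma 29 of \cite{Chau-Rasonyi2017}, exactly as in the proof of Theorem \ref{theo:indep}, and use Theorem \ref{theo:rand} to remove the randomization. This gives $B\in\cA(x,\M)$ up to the randomization, which Remark \ref{rem:random_max} handles.

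4. Convergence of the terminal values. Meyer--Zheng convergence does not control $B_T$ directly, so we use the terminal liquidation trick: append $T$ as a fixed jump time, or extend the paths to be constant on $[T,T+1]$, so that $V^{x,B^n}_T\to V^{x,B}_T$ in law. Upper semicontinuity of $U$ and A.3 (via Fatou) then give $\limsup_n u(x,\M^n)=\limsup_n\E U(V^{x,B^n}_T,S^n)\le \E U(V^{x,B}_T,S)\le u(x,\M)$.

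Combining the two inequalities gives $\lim_n u(x,\M^n)=u(x,\M)$, with equality throughout, so $B$ is optimal.
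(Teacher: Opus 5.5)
Your plan follows the paper's architecture. For the lower bound: a Skorokhod coupling with $\|S^n-S\|\to0$; a piecewise-constant approximation whose jumps are bounded continuous $(-K)$-valued functions of finitely many values of $Y$ (the cone constraint is kept by projecting onto $-K$, as in Lemma~\ref{lemma4}); then transfer to $Y^n$ with stopping and liquidation. For the upper bound: the variation bound from $\e$-uniformly consistent price systems plus contiguity; Helly compactness in the Meyer--Zheng topology (the paper builds $\delta_T$ into the metric, which is the same as your extension to $[T,T+1]$); closedness of $H_K$ and of the admissibility set; and Fatou with {\bf A.3}.

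However, step~1 of your lower bound is wrong as stated. Since $\wh V^{x,(1-\alpha)B}_t=\alpha x+(1-\alpha)\wh V^{x,B}_t$, the slack is the constant vector $\alpha x$ of physical units, not $\alpha x/S_t$. Moreover $\alpha x\in\wh K_t$ only if $S_t\odot x\in K$. This fails whenever $x\in{\rm int}\,K$ contains a short position whose price rises far enough. So $(1-\alpha)B$ need not belong to $\cA(x)$ at all, and its terminal wealth $\alpha S_T\odot x+(1-\alpha)V^{x,B}_T$ need not dominate $(1-\alpha)V^{x,B}_T$.

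The repair is the device of Lemma~\ref{lemma1}: use $\alpha L^x+(1-\alpha)B$, i.e.\ convert the $\alpha$-share of $x$ into the numeraire at time $0$. Because $S^1\equiv1$, the slack $\alpha\ell(x)e_1\in{\rm int}\,K$ is then constant in both representations. The terminal wealth dominates $(1-\alpha)V^{x,B}_T$, so {\bf A.1} gives your estimate. Proposition~\ref{prop:1} instead starts from $x_\e=x-(\e/\sqrt{2d})\1$, uses the positive vector $\e'\1$ as margin, and concludes via continuity of $u$. Your repaired route merges these two steps.

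In step~3 of the upper bound, the condition you propose to pass to the limit is too weak for Lemma~29 of \cite{Chau-Rasonyi2017}. To conclude $B_t\in\sigma\{Y^t,\xi\}$ you need two things:
(a) $B_t\in\sigma\{Y,\xi\}$;
(b) ${}_tY$ independent of the whole triple $(Y^t,\xi,B_t)$, not merely $B_t$ conditionally independent of ${}_tY$ given $Y^t$.
Without $\xi$ in (b) the conclusion is false. For independent uniforms $C,\xi$, the variable $(C+\xi)\bmod 1$ is $\sigma\{C,\xi\}$-measurable and independent of $C$, but not $\sigma\{\xi\}$-measurable.

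Condition (b) does pass to the limit from ${}_tY^n\perp(Y^{n,t},\xi^n,B^n_t)$. This requires carrying $\xi^n$ along in the Skorokhod representation, as the paper does, and taking $t$ away from fixed discontinuities of $Y$; the remaining $t$ follow by right-continuity.

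Condition (a) is the harder one. In Theorem~\ref{theo:indep} it holds only because $B^2:=f(Y^2,\xi^2)$ by construction. A Skorokhod limit need not be a function of $(Y,\xi)$: if $B^n$ reads ever finer digits of $\xi^n$, the limit carries extra randomness independent of $(Y,\xi)$. Absorbing that randomness into a new uniform variable must be done causally, since an arbitrary disintegration of $B$ given $(Y,\xi)$ can destroy (b). The paper only asserts that the argument of Theorem~\ref{theo:indep} carries over verbatim. This point is precisely what you would have to supply in the step you yourself flagged as the main obstacle.
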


In particular, by taking an identical sequence ${\bf M}^n = {\bf M}$, we get that 
\begin{coro}
Let  ${\bf A.1}$ -- ${\bf A.4}$ hold. Then for every $x \in {\rm int}\, K$ there is $B \in \cA(x, \M)$ such that $u(x, {\bf M}) = \E[ U( V^{x, B}_T, S)]$.
\end{coro}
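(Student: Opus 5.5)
The statement is a corollary of Theorem \ref{theo:main}. I would apply that theorem to the constant sequence ${\bf M}^n := {\bf M}$ for all $n$, with $\P^n = \P$, $Y^n = Y$, $S^n = S$.

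First I check that the hypotheses of Theorem \ref{theo:main} hold for this sequence. Assumption {\bf A.1} concerns only $U$ and is assumed. Assumption {\bf A.2} is trivial, because the laws $\cL(Y^n,S^n\,|\,\P^n)$ are all equal to $\cL(Y,S\,|\,\P)$. For {\bf A.3}, the first family of random variables coincides with the second, so both reduce to the single uniform integrability hypothesis for ${\bf M}$. For {\bf A.4}, we take $Z^n := Z$, the $\varepsilon$-uniformly consistent price system of ${\bf M}$. Then $\Q^n = Z^1_T\P$ is a fixed measure. Since $Z^1$ is a martingale with $Z^1_0=1$ and $Z_t/S_t \in {\rm int}\,K^*$, we have $Z^1_T>0$ a.s. A constant sequence $\P$ is contiguous with respect to a constant sequence $\Q$ exactly when $\P\ll\Q$, and this holds because the density is strictly positive. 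This is the only point that needs any care, and it is elementary.

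Next, fix $x\in{\rm int}\,K$. By the definition \eqref{u} of the supremum, I choose $B^n\in\cA(x,{\bf M})$ with $\E[U(V^{x,B^n}_T,S)] \ge u(x,{\bf M})-1/n$. This needs $u(x,{\bf M})<\infty$, which follows from the uniform integrability in {\bf A.3}, since a uniformly integrable family is bounded in $L^1$ and $U\ge 0$. The sequence $(B^n)$ is then asymptotically optimal.

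Finally, the second part of Theorem \ref{theo:main} gives a strategy $B\in\cA(x,{\bf M})$ with $u(x,{\bf M})=\E[U(V^{x,B}_T,S)]$, which is the claim. I do not need the convergence-of-laws part of the conclusion. I expect no step to be a real obstacle: all the substance sits in Theorem \ref{theo:main}, and the only check is the contiguity argument for {\bf A.4} above.
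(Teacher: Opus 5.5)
Your proposal is correct and is the paper's own argument: the paper obtains the corollary by applying Theorem \ref{theo:main} to the constant sequence ${\bf M}^n={\bf M}$. Your additional checks fill in details the paper leaves implicit. First, the contiguity in {\bf A.4} is automatic with $Z^n:=Z$, because $Z^1_T>0$ gives $\P\ll\Q$. Second, {\bf A.3} makes $u(x,{\bf M})$ finite, so an asymptotically optimal sequence exists.
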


\section{Lower semi-continuity}
\label{sec:lowe}

In this section, we establish the lower semi-continuity. We take an arbitrary admissible strategy $B$ from the limit model realization ${\bf M}$ and approximate it by a sequence of strategies from ${\bf M}^n$. Before we proceed, we prove the continuity of the Bellman function under our assumptions.

For  $r >0$ we define  the ball $\cO_r(x):=\{y\in \bbr^d\colon |y-x|< r\}$ with the closure $\bar \cO_r(x)$. Recall that the liquidation function 
\begin{equation} \label{eq:liquidation}
\begin{split}
    x\mapsto \ell(x)
    &:=\sup\{ \lambda\in \bbr\colon x-\lambda e_1\in K\}\\
    &=\sup\{\lambda\in \bbr\colon \exists\; a\in -K \hbox{ such that } x+a=\lambda e_1\}
\end{split}
\end{equation}
is continuous. Since  $\ell(x)>0$ for $x\in 
 {\rm int}\, K$, for such $x$ the ``liquidation'' strategy $L^x$ with 
  $\Delta L^x_0:=L^x_{0}-L^x_{0-}=\ell(x)e_1-x$ and $L^x_t\equiv \ell(x)e_1$ for $t\ge 0$ is in $\cA(x)$ and $V^{x,L^x}_T\in {\rm int}\, K$.  

\smallskip
It is well-known that a proper concave function  is continuous (and even locally Lipschitz) on its domain. 
Under the assumption {\bf A.1}  we have a stronger property.

\begin{lemm}
\label{lemma1}
Suppose that {\bf A.1} holds. Then $u$ is continuous on ${\rm int}\,K$. 
\end{lemm}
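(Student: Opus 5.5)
We will prove continuity at a fixed $x\in{\rm int}\,K$ by sandwiching $u(y)$, for $y$ close to $x$, between scaled versions of $u(x)$. The only structural facts needed are those listed in Section \ref{problem}. First, $\cA(y)\supseteq\cA(x)$ whenever $y-x\in K$. Second, $\cA(\lambda x)=\lambda\cA(x)$. Third, $V^{\lambda x,\lambda B}=\lambda V^{x,B}$, since the dynamics are linear. Fourth, $U$ is increasing in the first argument. Together with assumption {\bf A.1}, these give quantitative control of $u$ near $x$. We may assume $u(x)<\infty$; otherwise the same sandwich, combined with monotonicity of $u$ along $K$, shows $u\equiv\infty$ on ${\rm int}\,K$.

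\emph{Step 1: geometric lemma.} Since $x\in{\rm int}\,K$, there is $r>0$ with $\bar\cO_r(x)\subset{\rm int}\,K$. For $y\in\cO_\delta(x)$ and $\alpha\in(0,1)$, write
\[
y-(1-\alpha)x=\alpha x+(y-x).
\]
This vector lies in $K$ as soon as $|y-x|\le\alpha r$, because $\alpha x+(y-x)=\alpha\bigl(x+(y-x)/\alpha\bigr)$ and $x+(y-x)/\alpha\in\bar\cO_r(x)\subset K$. By the same argument, $x-(1-\alpha)y\in K$ when $|y-x|$ is small relative to $\alpha$. More precisely, $x-(1-\alpha)y=\alpha y+(x-y)$, and $y$ stays in a fixed ball $\bar\cO_{r/2}(x)$ inside the interior. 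So choosing $\alpha=c|y-x|$ for a suitable constant $c$, both $y-(1-\alpha)x\in K$ and $x-(1-\alpha)y\in K$ hold.

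\emph{Step 2: the two inequalities.} From $y-(1-\alpha)x\in K$ we get
\[
\cA(y)\supseteq\cA((1-\alpha)x)=(1-\alpha)\cA(x).
\]
Hence, for $B\in\cA(x)$, using the scaling $V^{(1-\alpha)x,(1-\alpha)B}_T=(1-\alpha)V^{x,B}_T$ and {\bf A.1},
\[
u(y)\ge \E\,U((1-\alpha)V^{x,B}_T,S)\ge(1-m_1(\alpha))\E\,U(V^{x,B}_T,S)-m_2(\alpha)\E\zeta .
\]
Taking the supremum over $B$ yields $u(y)\ge(1-m_1(\alpha))u(x)-m_2(\alpha)\E\zeta$. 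The symmetric argument, applied with the roles of $x$ and $y$ exchanged, gives
\[
u(x)\ge(1-m_1(\alpha))u(y)-m_2(\alpha)\E\zeta .
\]
In particular $u(y)<\infty$ for $y$ near $x$, and $u(y)\le\bigl(u(x)+m_2(\alpha)\E\zeta\bigr)/(1-m_1(\alpha))$ once $m_1(\alpha)<1$.

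\emph{Step 3: conclusion.} Let $y\to x$. Then $\alpha=c|y-x|\to0$, and $m_i(\alpha)\to0$ by the continuity of $m_i$ with $m_i(0)=0$. The two bounds from Step 2 then squeeze $u(y)$ to $u(x)$, which proves continuity.

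The main obstacle is the bookkeeping in Step 1: the same small $\alpha$ must make both cone inclusions hold uniformly for $y$ in a neighbourhood, and this needs a uniform interior ball around the points $y$. Taking $y\in\cO_{r/2}(x)$, so that $\cO_{r/2}(y)\subset{\rm int}\,K$, handles it, and $\alpha=2|y-x|/r$ works for both inclusions.
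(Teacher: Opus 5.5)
Your overall scheme matches the paper's proof, and Steps 1 and 3 are fine. That scheme is a two-sided bound from A.1 with $\alpha\to0$, with the cone inclusions obtained from $x\in{\rm int}\,K$. Your second inequality is the paper's bound $u(y)\le u(\kappa x)$ with $\kappa=1/(1-\alpha)$, rewritten by scaling $y$ down instead of $x$ up.

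But Step 2 has a genuine gap: the strategy $(1-\alpha)B$ started from $y$ does not produce the terminal wealth $(1-\alpha)V^{x,B}_T$. With $z:=y-(1-\alpha)x$ one has
\[
\wh V^{y,(1-\alpha)B}_t=(1-\alpha)\wh V^{x,B}_t+z,\qquad V^{y,(1-\alpha)B}_T=(1-\alpha)V^{x,B}_T+\varphi_T^{-1}z .
\]
Here $z$ lies in $K$ but in general not in $\bbr^d_+$. Since $U$ is increasing only for the componentwise order, you cannot conclude $U(V^{y,(1-\alpha)B}_T,S)\ge U((1-\alpha)V^{x,B}_T,S)$.

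Worse, $\wh K_t=K/S_t$ moves with the prices, so a passively held surplus $z\in K$ need not remain in $\wh K_t$. Hence $(1-\alpha)B$ need not even belong to $\cA(y)$: take $B=L^x$ when $z$ has a negative coordinate $i$ and $S^i_t$ can become large. The inclusion $\cA(y)\supseteq\cA(x)$ you quote from Section~\ref{problem} is safe with the \emph{same} strategy only when $y-x\in\bbr^d_+\subset\wh K_t$. The structural facts you list do not, by themselves, make $u$ monotone with respect to $K$.

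The missing idea is the one the paper's proof uses: liquidate the surplus into the numeraire at time $0$. Replace $(1-\alpha)B$ by $(1-\alpha)B+L^{z}$. Its initial jump $\ell(z)e_1-z$ lies in $-K$, and $\ell(z)\ge 0$ because $z\in K$. Therefore $\wh V_t=(1-\alpha)\wh V^{x,B}_t+\ell(z)e_1\in\wh K_t$ for all $t$. Since $S^1\equiv1$, also $V_T=(1-\alpha)V^{x,B}_T+\ell(z)e_1\ge(1-\alpha)V^{x,B}_T$ componentwise.

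This is precisely the paper's strategy $L^{y-x}+\alpha L^x+(1-\alpha)B$ with the two liquidations merged; superadditivity of $\ell$ gives $\ell(z)\ge\ell(y-x)+\alpha\ell(x)$. Your symmetric inequality needs the same modification, with $x-(1-\alpha)y$ in place of $z$ and $B\in\cA(y)$. The paper's step $u(y)\le u(\kappa x)$ tacitly relies on the same device. With this repair your argument coincides with the paper's.
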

\begin{proof}
Let $x\in {\rm int}\,K$ and let   $B\in \cA(x)$.  
For all sufficiently small $\e>0$ the ball  $\bar \cO_\e(x)=x+\bar \cO_\e(0) \subset {\rm int}\, K$ so that 
 $\ell(y)e_1\in {\rm int}\, K$ for all $y\in \bar\cO_\e(x) $.   
  For any $\alpha\in ]0,1]$ the strategy $B^{\alpha,x}:=\alpha L^x+(1-\alpha)B$ belongs to $\cA(x)$ and $x+(1/S)\cdot B^{\alpha,x}\in \alpha \ell(x)+\widehat K$. It is easily seen 
that
\[
\begin{aligned}
y+(1/S)\cdot L^{y-x} +(1/S)\cdot B^{\alpha,x}
&=\ell (y-x)e_1+x+(1/S)\cdot B^{\alpha,x}\\
&\in (\alpha \ell(x) +\ell (y-x))e_1+\widehat K
\end{aligned}
\]
and one can find $\e_0=\e_0(\alpha)>0$ such that $\alpha \ell(x) +\ell (y-x)\ge 0$ 
for all $y\in \bar \cO_{\e_0}(x)$. 
Then the strategy
$L^{y-x} + B^{\alpha,x}\in \cA(y)$ and 
\[
y+(1/S)\cdot (L^{y-x} +B^{\alpha,x})_T=(\alpha \ell(x) +\ell (y-x))e_1+(1-\alpha)V^{x,B}_T\ge (1-\alpha)V^{x,B}_T.
\]
Due to monotonicity of $U$ and the bound \eqref{Ua} we have that 
\[
\begin{aligned}
&U(y+(1/S)\cdot (L^{y-x} +B^{\alpha,x})_T,S)\\
&\quad \ge U((1-\alpha)V^{x,B}_T,S)\\
&\quad \ge (1-m_1(\alpha))U(V^{x,B}_T,S)-m_2(\alpha)\zeta
\end{aligned}
\]
and, hence, 
\[
u(y)\ge (1-m_1(\alpha))u(x)  - m_2(\alpha)\E [\zeta]. 
\]
It follows that 
\[
\lim_{\e\to 0}\inf_{y\in  \bar \cO_\e(x) }u(y)\ge (1-m_1(\alpha))u(x)  - m_2(\alpha)\E [\zeta].  
\]
Letting $\alpha\downarrow 0$ we get the bound
\beq
\label{inf}
\lim_{\e\to 0}\inf_{y\in  \bar \cO_\e(x) }u(y)\ge u(x). 
\eeq

\smallskip
On the other hand, if $x+\bar \cO_\e(0)\in {\rm int}\, K$, then there is $\kappa=\kappa(\e)>1$ such that 
\[
x+\frac 1{\kappa-1}\bar \cO_\e(0)\in {\rm int}\, K
\]
or, equivalently, $\kappa x-(x+\bar \cO_\e(0))\in {\rm int}\, K$. Thus, $\kappa x\ge y$ for any $y\in \bar \cO_\e(x)$ and, therefore, 
\[
u(y)\le u(\kappa x):=\sup_{B\in\cA(\kappa x)}\E [U(V^{\kappa x,B}_T,S)]
=\sup_{B\in \cA(x)}\E [U(V^{\kappa x,\kappa B}_T,S)].
\]

Let $\alpha=1-1/\kappa$. We may assume that $\kappa (\e)\downarrow 1$ as $\e \downarrow 0$ implying that $\alpha:=\alpha (\e)\downarrow 0$.  
According to {\bf A.1}
\[
\E [U(V^{\kappa x,\kappa B}_T,S)] \le \frac 1{1-m_1(\alpha)}\E [U(V^{ x, B}_T,S)] +\frac {m_2(\alpha)}{1-m_1(\alpha)}\E[\zeta],
\] 
implying that 
\[
u(\kappa x)\le \frac 1{1-m_1(\alpha)}u(x)+\frac {m_2(\alpha)}{1-m_1(\alpha)}\E[\zeta]. 
\] 
It follows that 
\beq
\label{sup}
\lim_{\e\to 0}\sup_{y\in  \bar \cO_\e(x) }u(y)\le u(x) 
\eeq
and the continuity of $u$ follows from (\ref{inf}) and (\ref{sup}).

\end{proof}

For the uniform norm and modulus of continuity of a function $f:[0,T]\to \bbr$ we use the notations 
\[
||f||=||f||_T:=\sup_{s\le T}|f_s|,\qquad  w(f,\e)=w_{T}(f,\e):=  \max_{|r-s|\le \e}|f_r-f_s|. 
\]
We recall an elementary lemma on  approximation of the distribution function of a finite signed measure on $[0,T]$, see \cite{Protter}, Lemma 12.3.  

\begin{lemm} 
\label{lemma2}
Let $b=(b_t)_{t\in [0,T]}$ be a c\`adl\`ag function of bounded variation and let 
\begin{equation}
\label{appr1}
\begin{split}
b^m&:=b_0I_{[t_0,t_1[}+\sum_{k = 1}^{m-1} (b_{t_k}-b_{t_{k-1}})I_{\Delta_k}+b_TI_{\{T\}},\\
\Delta_k&:=[t_k,t_{k+1}[, \quad  t_k=t_k^m:=kT/m.
\end{split}
\end{equation}
Then for any continuous  function $f:[0,T]\to \R$  and any $t_k$ (in particular, for $t_m=T$)
\beq
\label{ineq1+}
|f\cdot b^m_{t_k}- f\cdot b_{t_k}|\le w_{t_k}(f,T/m){\rm Var}_{t_k} b \le  w(f,T/m){\rm Var}_T b. 
\eeq
\end{lemm}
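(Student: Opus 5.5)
The plan is to compare the two Stieltjes integrals interval by interval on the grid $t_j=jT/m$. I read \eqref{appr1} as defining a piecewise constant c\`adl\`ag function whose associated measure has an atom $b_0$ at $0$ (with the paper's convention $b_{0-}=0$) and atoms $b_{t_j}-b_{t_{j-1}}$ at the points $t_j$, $j=1,\dots,m$. Equivalently, I take $b^m_{t_j}=b_{t_j}$ for every grid point, and $b^m$ is constant between consecutive grid points. I will make this reading explicit first, since the displayed formula lists the increments rather than the cumulative values.

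\emph{Step 1: write both integrals as sums.} Under this reading,
\[
f\cdot b^m_{t_k}=f(0)b_0+\sum_{j=1}^{k} f(t_j)\,(b_{t_j}-b_{t_{j-1}}).
\]
For the original function, additivity of the Stieltjes integral over the partition $\{0\}\cup\bigcup_{j\le k}\,]t_{j-1},t_j]$ gives
\[
f\cdot b_{t_k}=f(0)b_0+\sum_{j=1}^{k}\int_{]t_{j-1},t_j]} f_s\,db_s.
\]
The atom at zero is common to both and cancels. Since $b_{t_j}-b_{t_{j-1}}=\int_{]t_{j-1},t_j]}db_s$, the difference is
\[
f\cdot b^m_{t_k}-f\cdot b_{t_k}=\sum_{j=1}^k\int_{]t_{j-1},t_j]}\bigl(f(t_j)-f_s\bigr)\,db_s .
\]

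\emph{Step 2: estimate the difference.} For $s\in\,]t_{j-1},t_j]$ we have $|t_j-s|\le T/m$ and $t_j\le t_k$, so $|f(t_j)-f_s|\le w_{t_k}(f,T/m)$. Bounding each integral by the total variation measure $|db|$ of the corresponding interval and summing over $j$ yields
\[
|f\cdot b^m_{t_k}-f\cdot b_{t_k}|\le w_{t_k}(f,T/m)\sum_{j\le k}{\rm Var}_{]t_{j-1},t_j]}b\le w_{t_k}(f,T/m)\,{\rm Var}_{t_k}b .
\]
Monotonicity of $w_t$ and ${\rm Var}_t$ in $t$ then gives the second inequality in \eqref{ineq1+}. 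The case $t_k=t_m=T$ is covered, because the indicator $I_{\{T\}}$ in \eqref{appr1} only fixes the value at the terminal point to be $b_T$.

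The only real obstacle is bookkeeping. The argument needs the jump conventions, namely the atom at $0$ and the use of the left-open intervals $]t_{j-1},t_j]$ for c\`adl\`ag integrators, to match exactly between $b$ and $b^m$. Once the atoms of $db^m$ are identified, the estimate is immediate.
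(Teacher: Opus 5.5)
Your proof is correct. The paper gives no argument of its own and simply cites \cite{Protter}, Lemma 12.3; your interval-by-interval comparison of the atoms of $db^m$ with $\int_{]t_{j-1},t_j]} f\,db$ is the standard proof behind that reference. Your cumulative reading of \eqref{appr1}, with $b^m=b_{t_k}$ on $[t_k,t_{k+1}[$ and $b^m_T=b_T$, is the intended one: the Remark after the lemma and the later use $B^m_{t_k}=B_{t_k}$ both confirm it.
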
 

\smallskip
{\sl Remark.} The function $b$ is the distribution function of a  measure $b(ds)$, while the function $b^m$ is the distribution function of the discrete  measure 
\[
b^m(ds):=b_0\delta_0(ds)+\sum_{k = 1}^{m} (b_{t_k}-b_{t_{k-1}})\delta_{t_k}(ds). 
\]
Lemma \ref{lemma2} implies the weak$*$ convergence of the sequence $b^m(ds)$ to $b(ds)$.  

\smallskip 

\smallskip
Let $B=(B_t)_{t\in [0,T]}$ be a $d$-dimensional adapted c\`adl\`ag process of bounded variation
such that $\dot B_t\in -K$ for all $t\in [0,T]$. Define its piecewise-constant approximation $B^m$ as in   (\ref{appr1}). Then $B^m$ is a strategy, that is  $\dot B^m_t\in -K$ for all $t\in [0,T]$. Indeed, 
\[
\dot B^m_{t_k}=\Delta B^m_{t_k}=B_{t_k}-B_{t_{k-1}}=\int_{[t_{k-1},t_{k}[}\dot B_s d{\rm Var}B_s\in -K 
\]
and, hence, $\dot B^m_{t}\in -K$ for all $t\in [0,T]$.  

Note that in the above arguments it is important that the cone $K$ is constant. 

\begin{lemm} 
\label{lemma3}
Let $\e\in ]0,1]$ be such that $\cO_{\e}(x)\subset {\rm int}\,K$. Let 
$B\in  \cA(x)$ and let $B^m$ be the strategy defined as in  (\ref{appr1}). 
Then 

$(i)$ $|\widehat V^{x,B^m}_T-\widehat V^{x,B}_T|\to 0$ as $m\to \infty$; 

$(ii)$
 there is an  $\bbr^d_+$-valued adapted c\`adl\`ag process $\xi^m=\xi^m(x,S,B)$ with jumps only at the points $t_k$ and such that $||\xi^m||\to 0$ as $m\to\infty$ and \[
\widehat V^{x,B^m}_t+\xi^m_t\in \widehat K_t \quad \hbox{for all $t\in [0,T]$}.  
\]
\end{lemm}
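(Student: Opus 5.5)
Everything is pathwise: fix $\omega$ and work with the continuous, strictly positive path $S$ (so $1/S$ is continuous on $[0,T]$) and the c\`adl\`ag bounded-variation path $B$. Write $\widehat V^{x,B}_t=x+(1/S)\cdot B_t$ componentwise, and similarly for $B^m$.

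\emph{Step (i).} Apply Lemma \ref{lemma2} to each coordinate, with $f=1/S^i$ and $b=B^i$, at $t_m=T$. This gives
\[
|\widehat V^{x,B^m}_T-\widehat V^{x,B}_T|\le \sum_i w(1/S^i,T/m)\,{\rm Var}_T B^i .
\]
The right-hand side tends to $0$ because $1/S^i$ is uniformly continuous on $[0,T]$. The same lemma also gives, at every grid point,
\[
|\widehat V^{x,B^m}_{t_k}-\widehat V^{x,B}_{t_k}|\le \delta_m:=\sum_i w(1/S^i,T/m)\,{\rm Var}_T B^i .
\]
The quantity $\delta_m$ is adapted in the sense that $w_{t}$ and ${\rm Var}_{t}$ may replace the global quantities at time $t_k$.

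\emph{Step (ii), intermediate times.} Since $B^m$ is constant on $\Delta_k$, for $t\in\Delta_k$ we have $\widehat V^{x,B^m}_t=\widehat V^{x,B^m}_{t_k}$. This vector is within $\delta_m$ of $\widehat V^{x,B}_{t_k}\in\widehat K_{t_k}$. The cone $\widehat K_t=\varphi_tK$ moves continuously in $t$, so I would go back to monetary units. Put $\Phi_t:={\rm diag}(S_t)$, so that $x\in\widehat K_t$ iff $\Phi_t x\in K$. Then
\[
\Phi_t\widehat V^{x,B^m}_t=\Phi_{t_k}\widehat V^{x,B}_{t_k}+\Phi_t\big(\widehat V^{x,B^m}_{t_k}-\widehat V^{x,B}_{t_k}\big)+(\Phi_t-\Phi_{t_k})\widehat V^{x,B}_{t_k}.
\]
The first term lies in $K$. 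The remaining error $\eta^m_t$ satisfies
\[
|\eta^m_t|\le \|S\|\delta_m+w(S,T/m)\,\|\widehat V^{x,B}\|,
\]
and $\|\widehat V^{x,B}\|\le |x|+\|1/S\|{\rm Var}_T B<\infty$. Hence $\sup_t|\eta^m_t|\to 0$.

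\emph{Step (ii), the correction $\xi^m$.} Because $\bbr^d_+\setminus\{0\}\subset{\rm int}\,K$, the vector $e=(1,\dots,1)$ lies in ${\rm int}\,K$. So there is $c>0$ with $z+c|z|e\in K$ for all $z$; concretely, take $c$ so that $\cO_{1/c}(e)\subset K$. Define, piecewise constant on each $\Delta_k$,
\[
\xi^{m,i}_t:=c\,\varepsilon^m_k/S^i_t ,\qquad \varepsilon^m_k:=\sup_{s\in\Delta_k}|\eta^m_s| .
\]
This $\xi^m_t$ is not constant in $t$ because of $1/S_t$; alternatively, use the stronger bound $\varepsilon^m_k\ge$ the sup over the interval, which keeps $\xi^m$ constant on intervals with jumps only at the $t_k$ as the statement requires. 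Then $\Phi_t(\widehat V^{x,B^m}_t+\xi^m_t)=(\text{element of }K)+\eta^m_t+c\,\varepsilon^m_k e\in K$.

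\emph{Main obstacle: adaptedness.} A supremum over $\Delta_k$ looks forward in time, so $\xi^m$ must not be built from future values of $S$ on $\Delta_k$. I would fix this by using at time $t_k$ only $\mathcal F_{t_k}$-measurable quantities known at $t_k$, shifted by one grid step. That is, take $\varepsilon^m$ on $\Delta_k$ to be a bound built from $w_{t_k}(S,T/m)$, $w_{t_k}(1/S,T/m)$, $\sup_{s\le t_k}S_s$ and ${\rm Var}_{t_k}B$, plus a running maximum up to $t$. I would also choose $\xi^m_t:=c\,\varepsilon^m_t\,(1/S_t)$-type expressions using only the past. A running supremum process $\sup_{s\le t}$ is adapted and c\`adl\`ag, and it still tends to $0$ uniformly. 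The statement asks for jumps only at the $t_k$, and this is compatible with a continuous part, since continuity between grid points is allowed.

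Finally $\|\xi^m\|\le c\,\|1/S\|\sup_k\varepsilon^m_k\to0$ by uniform continuity of $S$ and $1/S$. The hypothesis $\cO_\e(x)\subset{\rm int}\,K$ only ensures the setting is nondegenerate (e.g.\ $e$-direction slack); it is not essential here.
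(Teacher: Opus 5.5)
Your argument is correct. It rests on the same ingredients as the paper's proof: Lemma~\ref{lemma2} at the grid points, uniform continuity of $S$ and $1/S$ on each $\Delta_k$, and $\widehat K_t=\varphi_t K$. The difference is in how you build the correction.

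The paper stays in physical units. It obtains $\widehat V^{x,B^m}_{t_k}+\tilde\xi^m_{t_k}\in\widehat K_{t_k}$ with the nonnegative slack $\tilde\xi^{m,i}_{t_k}=w_{t_k}(1/S^i,T/m){\rm Var}_{t_k}B^i$. It then transports this vector exactly by $\varphi_t\varphi_{t_k}^{-1}={\rm diag}(S_{t_k}/S_t)$ and sets $\xi^m_t$ equal to the resulting difference. Adaptedness is then automatic. However, the components $(S^i_{t_k}/S^i_t-1)\widehat V^{x,B^m,i}_{t_k}+(S^i_{t_k}/S^i_t)\tilde\xi^{m,i}_{t_k}$ can be negative (take $B=0$, $x^i>0$, $S^i_t>S^i_{t_k}$). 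To get an $\bbr^d_+$-valued process one still has to take componentwise absolute values, using $\widehat K_t+\bbr^d_+\subseteq\widehat K_t$.

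You instead compare with $\Phi_{t_k}\widehat V^{x,B}_{t_k}\in K$ in monetary units and absorb the whole error $\eta^m_t$ with a multiple of $e$. This gives a nonnegative $\xi^m$ by construction. The price is that you need an adapted scalar dominating $|\eta^m_t|$.

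Your last paragraph, which handles this, should be tightened. Quantities such as $w_{t_k}(S,T/m)$ and $\sup_{s\le t_k}S_s$ do not control the term $(\Phi_t-\Phi_{t_k})\widehat V^{x,B}_{t_k}$, because it involves $S$ after $t_k$; drop that idea. Simply take $\xi^{m,i}_t:=|\eta^m_t|/S^i_t$. For $t\in\Delta_k$ the vector $\eta^m_t=\Phi_t\widehat V^{x,B^m}_t-\Phi_{t_k}\widehat V^{x,B}_{t_k}$ is $\cF_t$-measurable and continuous in $t$ on $\Delta_k$. Hence $\xi^m$ is adapted and c\`adl\`ag with jumps only at the $t_k$, and $\|\xi^m\|\le\|1/S\|\sup_t|\eta^m_t|\to0$. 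No supremum over $\Delta_k$ and no running maximum is needed.

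Also, no ball around $e$ is required: $z+|z|e\in\bbr^d_+\subseteq K$ for every $z$, so $c=1$ works. As you note, the hypothesis $\cO_\e(x)\subset{\rm int}\,K$ plays no role; the paper's proof does not use it either.
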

\begin{proof}
Recall that  
$
\widehat V^{x,B,i}
:=x^i+(1/S^i)\cdot B^i$ and $(i)$ follows directly from (\ref{ineq1+}) since $w(1/S^i,T/m)\to 0$ a.s. 
The property $B\in  \cA(x)$ means that $\widehat V^{x,B}_t\in \widehat K_t$
for all $t\in [0,T]$. 
According to (\ref{ineq1+}) 
\[
(1/S^i)\cdot B^{m,i}_{t_k}+\tilde \xi^{m,i}_{t_k}\ge (1/S^i)\cdot B^{i}_{t_k} ,  
\]
where $\tilde \xi^{m,i}_{t_k}:=w_{t_k}(1/S^i,T/m){\rm Var}_{t_k} B^i\ge 0$.  It follows that 
$
\widehat V^{x,B^m}_{t_k}+\tilde \xi^m_{t_k}\in \widehat K_{t_k} 
$
for all $k\le m$. The obvious identity  $
\varphi_{t}\varphi_{t_k}^{-1} \widehat K_{t_k}=\widehat K_t 
$ implies that  
\[
\varphi_{t}\varphi_{t_k}^{-1}(\widehat V^{x,B^m}_{t_k}+\tilde \xi^m_{t_k})\in \widehat K_{t}.
\]
For $t\in [t_k,t_{k+1}[$ we have $\widehat V^{x,B^m}_{t}=\widehat V^{x,B^m}_{t_k}$ and  $\widehat V^{x,B^m}_{t}+\xi^m_{t}\in \widehat K_{t}$, where 
\[
\xi^m_{t}:=(\varphi_{t}\varphi_{t_k}^{-1}-I)\widehat V^{x,B^m}_{t_k}+\varphi_{t}\varphi_{t_k}^{-1} \tilde \xi^m_{t_k} 
\] 
and  $I$ is the identity matrix. 

Note that on  $[t_k,t_{k+1}[$ we have the bound 
$|S^i_{t_k}/S^i_t-1|\le ||1/S^i||w(S^i,T/m)$. Also,  $S^i_{t_k}/S^i_t\le ||S^i||||1/S^i||$. 
In virtue of (\ref{ineq1+})
\[
\begin{aligned}
||(1/S^i)\cdot B^{m,i}||
&\le  ||(1/S^i)\cdot B^{i}||+\tilde \xi^{m,i}\\
&\le ||(1/S^i)||{\rm Var}_{T} B^i
+w(S^i,T/m){\rm Var}_{T} B^i.
\end{aligned}
\]
Summarizing, we get that 
\[
\begin{aligned}
||\xi^{m,i}||
&\le w(S^i,T/m)||1/S^i||\Big[|x^i|\\
&\quad+\big(||(1/S^i)||+ w(S^i,T/m)+||S^i||\big){\rm Var}_{T} B^i \Big]
\end{aligned}
\]
and the result follows. 
\end{proof}

The next two lemmata on approximation are of general interest and give extensions of classical results to the 
case of the cone-valued random variables. 

Recall that $\rho(\eta_1,\eta_2):=\E [|\eta_1-\eta_2|\wedge 1]$ is a metric on 
$L^0(\bbr^d,\P)$ defining the convergence in probability.   
\begin{lemm} 
\label{lemma4} 
Let $\delta>0$, let $\zeta$ be an $\bbr^k$-valued random variable, and let $\eta$ be a 
$\sigma\{\zeta\}$-measurable  random variable  with values  in a closed convex cone $G$. Then there exists 
a bounded continuous function $f:\bbr^k\to G$ such that $\E[|f(\zeta)-\eta |\wedge 1]<\delta$. 
\end{lemm}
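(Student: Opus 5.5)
We use the Doob representation to reduce to a deterministic Borel function, approximate that function by a continuous one, and then push the values into the cone with the metric projection. Since $\eta$ is $\sigma\{\zeta\}$-measurable, the Doob lemma gives a Borel function $g:\bbr^k\to\bbr^d$ with $\eta=g(\zeta)$ a.s. Modifying $g$ on a $\cL(\zeta)$-null set, we may assume that $g$ takes values in $G$ everywhere; for instance, set $g=0$ off the Borel set $g^{-1}(G)$, which has full measure. The task then becomes purely analytic. Given a Borel map $g:\bbr^k\to G$ and the finite Borel measure $\mu:=\cL(\zeta)$ on $\bbr^k$, we must find a bounded continuous $f:\bbr^k\to G$ with $\int (|f-g|\wedge 1)\,d\mu<\delta$.

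First we truncate to reduce to bounded functions. Put $g_N:=g\1_{\{|g|\le N\}}$. Since $0\in G$, each $g_N$ is $G$-valued, bounded by $N$, and satisfies $g_N\to g$ pointwise. By dominated convergence (the integrand is bounded by $1$), $\int(|g_N-g|\wedge1)\,d\mu<\delta/2$ for $N$ large.

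Next we approximate the bounded function $g_N$ in $L^1(\mu)$ by a continuous function. By the density of $C_b(\bbr^k)$ in $L^1(\mu)$ for a finite Borel measure (via Lusin's theorem, or Urysohn applied to simple functions, coordinatewise), there is a bounded continuous $h:\bbr^k\to\bbr^d$ with $\int|h-g_N|\,d\mu<\delta/2$. The only problem is that $h$ need not take values in $G$.

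To fix this we compose with the metric projection $\Pi_G$ onto the closed convex set $G$. The map $\Pi_G$ is $1$-Lipschitz, satisfies $\Pi_G(0)=0$, and is the identity on $G$. Put $f:=\Pi_G\circ h$. Then $f$ is continuous, $G$-valued, and bounded since $|f|\le|h|$. Because $g_N\in G$, we have $|f-g_N|=|\Pi_G h-\Pi_G g_N|\le|h-g_N|$. Combining the two estimates with the triangle inequality for $|\cdot|\wedge1$ yields $\E[|f(\zeta)-\eta|\wedge1]<\delta$.

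The one step requiring care is keeping the approximant inside $G$. The projection argument resolves it and uses only that $G$ is closed and convex; the cone property is used only through $0\in G$ in the truncation step. The remaining steps are standard measure theory.
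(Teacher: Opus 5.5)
Your proof is correct and follows essentially the same route as the paper: truncate, use the Doob representation, approximate by a bounded continuous function in $L^p(\cL(\zeta))$ (the paper uses $L^2$, you use $L^1$), and compose with the metric projection onto $G$. The only difference is the truncation step, which is cosmetic: you cut off with an indicator, which needs only $0\in G$, whereas the paper rescales radially via $c\eta/|\eta|$, which uses the cone property.
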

\begin{proof}
Put $\eta_c:=\eta I_{\{|\eta|\le c\}}+c\eta /|\eta|I_{\{|\eta| > c \}}$ where the constant $c>0$. The r.v.  $\eta_c$ is $\sigma\{\zeta\}$-measurable, takes values in $G$, 
$|\eta_c|\le c$, and  $\E[|\eta - \eta_c |\wedge 1]<\delta/2$ when $c$ is large enough. 
By the Doob theorem $\eta_c=f_0(\zeta)$ for some Borel function $f_0$ with $|f_0|\le c$. 
The set of real-valued bounded continuous functions is dense in $L^2(\bbr^k,\cB(\bbr^k),\mu_\zeta)$, where $\mu_\zeta$ is the distribution $\zeta$, so that  
there is a  continuous function $f_1$ with $|f_1|\le c$ and  such that 
$\E[|f_1(\zeta)-f_0(\zeta)|^2]< \delta^2/4$. Put $f(y):=\Pi (f_1(y))$, where $\Pi$ is the Euclidean projection onto the closed convex cone $G$. Recall that 
$\Pi$ is a continuous mapping and 
$|\Pi u-v|^2\le |u-v|^2$ for any $v\in G$ and $u\in \bbr^k$.  Thus
\[
 \E[|f(\zeta)-\eta_c |\wedge 1]\le (\E[|f(\zeta)-f_0(\zeta)|^2])^{1/2}<\delta/2 
\]
implying the result. 
\end{proof}

In the following lemma the $\sigma$-algebra $\cF_t$ is generated by a $p$-dimensional   c\`adl\`ag  process $Y=(Y_s)_{s\le t}$, i.e. $\cF_t=\sigma\{Y_s,\; s\le t\}$.

\begin{lemm} 
\label{lemma4+} 
Let $\delta>0$ and let $\eta$ be an $\cF_{t}$-measurable  random variable  with values in a closed convex cone $G$. Then there are $r_i\in [0,t]$,  $i=1,...,M$, and a bounded continuous function 
$f: (\bbr^{p})^M\to G$ such that $\E [|f(Y_{r_1},...,Y_{r_M})-\eta|\wedge 1]<\delta$. 
\end{lemm}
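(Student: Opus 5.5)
The plan is to reduce the statement to Lemma \ref{lemma4}. The first step is to approximate $\cF_t$ by $\sigma$-algebras generated by finitely many coordinates of $Y$. Since $Y$ is c\`adl\`ag, $\cF_t=\sigma\{Y_s,\ s\le t\}$ coincides with $\sigma\{Y_s,\ s\in D\}$, where $D=(\mathbb{Q}\cap[0,t])\cup\{t\}$. Every $Y_s$ with $s\le t$ is then a limit of $Y_{s_j}$ for a sequence $s_j\in D$ with $s_j\downarrow s$ (take $s_j=t$ when $s=t$).

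Enumerate $D=\{r_1,r_2,\dots\}$ and put $\cG_M:=\sigma\{Y_{r_1},\dots,Y_{r_M}\}$. These $\sigma$-algebras increase and $\bigvee_M\cG_M=\cF_t$. As in the proof of Lemma \ref{lemma4}, first truncate $\eta$ radially at a level $c$, giving $\eta_c$ with values in $G$, $|\eta_c|\le c$, and $\E[|\eta-\eta_c|\wedge 1]<\delta/3$. By the martingale convergence theorem (L\'evy's upward theorem), applied componentwise to the bounded variable $\eta_c$,
\[
\eta^M:=\E[\eta_c\,|\,\cG_M]\to \eta_c\quad\hbox{in } L^1 \hbox{ as } M\to\infty .
\]

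The main point to check, and the only delicate one, is that $\eta^M$ still takes values in $G$. This holds because $G$ is a closed convex cone, so it is the intersection of the half-spaces $\{y\colon wy\ge 0\}$ over $w\in G^*$. For each such $w$,
\[
w\,\eta^M=\E[w\eta_c\,|\,\cG_M]\ge 0 \quad\hbox{a.s.}
\]
Taking a countable dense subset of $G^*$ then gives $\eta^M\in G$ a.s., and on the exceptional null set we redefine $\eta^M:=0$. We now fix $M$ with $\E[|\eta^M-\eta_c|\wedge1]<\delta/3$.

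By construction $\eta^M$ is $\sigma\{\zeta\}$-measurable, where $\zeta:=(Y_{r_1},\dots,Y_{r_M})$ is $\bbr^{pM}$-valued, and it is $G$-valued. Lemma \ref{lemma4} therefore provides a bounded continuous $f\colon(\bbr^p)^M\to G$ with $\E[|f(\zeta)-\eta^M|\wedge1]<\delta/3$. Since $(a+b)\wedge 1\le a\wedge1+b\wedge1$, the triangle inequality for $\rho$ combines the three estimates into $\E[|f(Y_{r_1},\dots,Y_{r_M})-\eta|\wedge1]<\delta$.
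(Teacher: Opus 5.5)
Your proof is correct and follows the paper's argument. You truncate $\eta$ radially, pass to $\E[\eta_c\,|\,\cG_M]$ via L\'evy's upward theorem on $\sigma$-algebras generated by finitely many values of $Y$ (the paper uses the dyadic grids $i2^{-N}t$ rather than an enumeration of $(\mathbb{Q}\cap[0,t])\cup\{t\}$), and finish with Lemma~\ref{lemma4}; your explicit checks that this conditional expectation stays $G$-valued and that $\bigvee_M\cG_M=\cF_t$ by right-continuity fill in points the paper leaves implicit.
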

\begin{proof}
As in the above proof, we can reduce the problem to the approximation of the random variable $\eta_c$ such that  $\E[|\eta - \eta_c |\wedge 1]<\delta/2$. 
Let $r_i^N:=i2^{-N}t$, $i=0,...,2^N $, $n\in \N$, and let  
$\cG^N_t:=\sigma\{Y_{r_i^N},\  i\le 2^N \}$.  
Then $(\cG^N_t)_{N\ge 1}$ is a discrete-time filtration with $\sigma\{\cG^N_t,\; N\ge 1\} =\cF_{t}$. 
By the L\'evy theorem 
$\eta_c^N:=\E [\eta_c|\cG^N_t]\to \eta_c$ almost surely as $N\to \infty$. Hence,  
$\E[|\eta_c^N - \eta_c |\wedge 1]<\delta/4$ for sufficiently large $N$. It remains to apply the previous lemma to $\eta_c^N$. 
\end{proof}

Lemma \ref{lemma4+}  implies as an obvious corollary  the following 
\begin{lemm} 
\label{lemma4++} 
 Let $\mathcal{B}_m({\bf F})$ be the set of ${\bf F}$-adapted  processes constant on the intervals $[t_k,t_{k+1}[$, $t_k=
kT/m$, with jumps $\Delta B^m_{t_k}\in L^0(-K,\cF_{t_k})$ and let $\mathcal{B}^c_m({\bf F})$ be its subset consisting of the processes  such that  $\Delta B^m_{t_k}$ can be represented  as bounded continuous functions of values of the process $Y$ at a finite number of points on $[0,t_k]$. Then  for any process $B^m\in \mathcal{B}_m({\bf F})$ there exists a sequence of processes 
$B^{m,l}\in \mathcal{B}^c_m({\bf F})$ such that $||B^{m,l}-B^m||\to 0$ a.s. 
as $l\to \infty$. 
\end{lemm}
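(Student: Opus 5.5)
The plan is to treat the finitely many jumps of $B^m$ one at a time, approximate each jump in $L^0$ by Lemma \ref{lemma4+}, and then pass from convergence in probability to almost sure convergence along a subsequence. Write $B^m=\sum_{k=0}^{m}\Delta B^m_{t_k}I_{[t_k,T]}$, with the convention $\Delta B^m_{t_0}=B^m_0$. Each jump $\eta_k:=\Delta B^m_{t_k}$ is $\cF_{t_k}$-measurable and takes values in the closed convex cone $G:=-K$.

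For each $l\in\bbn$ and each $k\le m$, apply Lemma \ref{lemma4+} with $t=t_k$, $G=-K$ and $\delta=2^{-l}/(m+1)$. This gives points $r^{l,k}_1,\dots,r^{l,k}_{M}\in[0,t_k]$ and a bounded continuous $f^{l,k}:(\bbr^p)^M\to -K$ such that
\[
\E\big[|f^{l,k}(Y_{r^{l,k}_1},\dots,Y_{r^{l,k}_M})-\eta_k|\wedge 1\big]<2^{-l}/(m+1).
\]
Define
\[
B^{m,l}:=\sum_{k=0}^m f^{l,k}(Y_{r^{l,k}_1},\dots,Y_{r^{l,k}_M})\,I_{[t_k,T]}.
\]
This process is constant on the intervals $[t_k,t_{k+1}[$. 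Its jumps at $t_k$ lie in $-K$ and are $\cF_{t_k}$-measurable, since all $r^{l,k}_i\le t_k$. They are bounded continuous functions of finitely many values of $Y$ on $[0,t_k]$, so $B^{m,l}\in\mathcal{B}^c_m({\bf F})$.

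To compare $B^{m,l}$ with $B^m$, note that both processes are piecewise constant on the same grid. Hence
\[
\|B^{m,l}-B^m\|\le \sum_{k=0}^m |f^{l,k}(\cdot)-\eta_k|.
\]
Using $(a+b)\wedge 1\le a\wedge 1+b\wedge 1$, this gives $\E[\|B^{m,l}-B^m\|\wedge 1]<2^{-l}$. By the Borel--Cantelli lemma, $\sum_l \P(\|B^{m,l}-B^m\|\wedge 1>l^{-2})\le\sum_l l^2 2^{-l}<\infty$, so $\|B^{m,l}-B^m\|\to 0$ a.s. Here the summable choice of $\delta$ gives almost sure convergence of the whole sequence directly; otherwise we would only have convergence in probability and would pass to a subsequence.

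I expect no serious obstacle, since the statement is essentially a finite-dimensional bookkeeping consequence of Lemma \ref{lemma4+}. The only points needing care are that the values stay in $-K$, which Lemma \ref{lemma4+} provides because $-K$ is a closed convex cone and the projection argument preserves it, and that the evaluation points $r_i$ do not exceed $t_k$, which preserves adaptedness.
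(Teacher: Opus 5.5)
Your proposal is correct and follows the route the paper intends: the paper records this statement as an obvious corollary of Lemma~\ref{lemma4+} without further argument. Your jump-by-jump application of that lemma with tolerances $2^{-l}/(m+1)$, together with the bound $\|B^{m,l}-B^m\|\le\sum_{k}|f^{l,k}(\cdot)-\eta_k|$ and the Borel--Cantelli step, supplies exactly the omitted details, in the same setting as Lemma~\ref{lemma4+} where $\cF_{t_k}=\sigma\{Y_s,\ s\le t_k\}$.
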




\begin{prop}
\label{prop:1}
Suppose that {\bf A.1},  {\bf A.2}, and {\bf A.3} hold. Let $x\in {\rm int}\,K$. Then  
\[
u(x) \le \liminf_n u^n(x). 
\]
\end{prop}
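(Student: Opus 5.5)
Fix $x\in{\rm int}\,K$ and $\delta>0$. By the continuity of $u$ (Lemma~\ref{lemma1}) we can work at a slightly smaller initial point. Choose $B\in\cA(x)$ with $\E[U(V^{x,B}_T,S)]\ge u(x)-\delta$. The plan is to build, for each $n$, a strategy $B^n\in\cA(x,\M^n)$ with $\liminf_n \E[U(V^{x,B^n}_T,S^n)]\ge u(x)-C\delta$.

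The construction goes through three successive approximations of $B$ in the limit realization $\M$. First, replace $B$ by the convex combination $B^{\alpha}:=\alpha L^x+(1-\alpha)B$ with small $\alpha$. This creates a uniform safety margin: $\widehat V^{x,B^\alpha}\in \alpha\ell(x)e_1+\widehat K$. By {\bf A.1} and monotonicity, the utility loss is controlled by $m_1(\alpha)u(x)+m_2(\alpha)\E\zeta$.

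Second, discretize $B^\alpha$ to the piecewise constant process $B^{\alpha,m}$ of \eqref{appr1}. Lemma~\ref{lemma3} shows that the admissibility violation $\xi^m$ tends to $0$ uniformly and that the terminal physical positions converge. Since $\alpha\ell(x)e_1/S$ has entries bounded below on a set of probability close to one, the safety margin absorbs $\xi^m$ there for large $m$.

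Third, use Lemma~\ref{lemma4++} to replace the jumps $\Delta B^{\alpha,m}_{t_k}$ by bounded continuous $(-K)$-valued functions $g_k(Y_{r_1},\dots,Y_{r_M})$ with $r_i\le t_k$. The approximation is close in probability, so after this step the strategy is an explicit continuous functional of finitely many values of $Y$. Choose the points $r_i$ among the (cocountable) continuity times of the law of $Y$; this is possible after a small perturbation, because approximation by $\sigma$-algebras generated at a dense set of such times still generates $\cF_t$. We then handle the residual lack of admissibility by a stopping device. Define the stopping time $\tau$ as the first grid time $t_k$ at which $\widehat V$ fails to lie in $\widehat K_{t_k}$ after adding an $\frac{\alpha}{2}\ell(x)e_1/S$ buffer. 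From $\tau$ on, we liquidate everything into the numeraire $e_1$ and stop trading. Since only the margin is consumed, the resulting strategy $\bar B$ is admissible. Because $S$ is continuous and trading occurs only at grid times, admissibility between grid times follows from a continuity argument like the one in Lemma~\ref{lemma3}, using that $e_1$ is price-independent. The event $\{\tau\le T\}$ has small probability.

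Transfer to $\M^n$ by the same formula: $\bar B^n:=\Phi(Y^n,S^n)$, with the same functions $g_k$, points $r_i$ and stopping rule. By construction $\bar B^n$ is ${\bf F}^{Y^n}$-adapted, $K$-decreasing and admissible in $\M^n$, so $\bar B^n\in\cA(x,\M^n)$. The map $(y,s)\mapsto V^{x,\Phi(y,s)}_T$ is continuous at $\cL(Y,S)$-a.e. point, except on the boundary event of the stopping rule. We make that event null by randomizing the buffer level over a continuum and selecting a good level, as in Fubini-type arguments. By {\bf A.2} and the continuous mapping theorem, $\cL(V^{x,\bar B^n}_T,S^n)\to\cL(V^{x,\bar B}_T,S)$.

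To conclude, we need semicontinuity of $U$. Continuity of $U$ in $s$ together with concavity in $x$ on the interior gives continuity on ${\rm int}\,K$, and the terminal values lie in ${\rm int}\,K$ thanks to the margin. Fatou's lemma under weak convergence (Skorokhod representation, $U\ge0$), or directly uniform integrability from {\bf A.3}, then gives
\[
\liminf_n u^n(x)\ge\lim_n\E[U(V^{x,\bar B^n}_T,S^n)]=\E[U(V^{x,\bar B}_T,S)]\ge u(x)-C\delta .
\]

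The main obstacle is keeping admissibility in the approximating models while preserving a.s. continuity of the functional. The margin from $\alpha L^x$ plus the liquidation-on-violation rule is the step I expect to need the most care, specifically the boundary-null choice of the rule and of the times $r_i$.
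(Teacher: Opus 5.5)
Your architecture (create a margin, discretize via Lemma~\ref{lemma3}, use continuous functionals of finitely many $Y$-values via Lemma~\ref{lemma4++}, stop and liquidate, transfer to $\M^n$, conclude with Fatou/{\bf A.3}) is the paper's. Creating the margin with $\alpha L^x+(1-\alpha)B$ and {\bf A.1}, rather than by shifting the initial point, is a harmless variant.

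\textbf{The gap is the stopping device.} You monitor solvency only at grid times. You then claim admissibility on $]t_k,t_{k+1}[$ follows ``by a continuity argument like the one in Lemma~\ref{lemma3}''. That does not hold.
\begin{itemize}
\item Between grid times the physical position $\wh V_{t_k}$ is frozen while $\wh K_t=K/S_t$ moves. In money, $S_t\odot\wh V_{t_k}-V_{t_k}=(S_t/S_{t_k}-\1)\odot V_{t_k}$.
\item For fixed $m$ this is not dominated by the deterministic buffer $\frac{\alpha}{2}\ell(x)e_1$ on every path. Take a short position in a risky asset and $S$ a geometric Brownian motion: with positive probability the price rises enough on $[t_k,t_{k+1}]$ to make the portfolio insolvent.
\item Lemma~\ref{lemma3} only gives $\|\xi^m\|\to0$ a.s., with an $\omega$-dependent rate (through $w(S,T/m)$, $\|S\|$, $\|1/S\|$, ${\rm Var}_T B$). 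That means violations of small probability, never no violation, whereas admissibility is a pathwise a.s.\ requirement.
\item The liquidation step fails for the same reason. When a violation is detected at $t_{k+1}$, the position held since $t_k$ may already lie outside $\wh K_{t_{k+1}}$, so its liquidation value is negative.
\end{itemize}
Hence in general $\bar B\notin\cA(x)$ and $\bar B^n\notin\cA(x,\M^n)$, and the bound $u^n(x)\ge\E[U(V^{x,\bar B^n}_T,S^n)]$ is not justified.

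\textbf{The repair is the paper's: stop in continuous time.} Set $\tau^n:=\inf\{t\colon \wh V^{C^n}_t-\text{buffer}\notin K/S^n_t\}$, and at $\tau^n$ liquidate the pre-jump position $V_{\tau^n-}$. Since $K$ is closed and $S^n$ is continuous, $\wh V_{\tau^n-}-\text{buffer}\in K/S^n_{\tau^n}$. So $\ell(V_{\tau^n-})>0$, and the stopped strategy is admissible on every path in every realization.

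With a continuous-time hitting time, proving a.s.\ continuity of the whole map $(y,s)\mapsto V^{x,\Phi(y,s)}_T$ by randomizing the buffer level becomes awkward, and it is unnecessary. It suffices that the probability of $\{\tau^n\le T\}$ tends to $0$. This follows (e.g.\ on a Skorokhod coupling) from the strict margin together with uniform convergence of $S^n$ and of the piecewise-constant positions. Then drop the stopped event using $U\ge0$ and apply Fatou/{\bf A.3} on $\{\tau^n=\infty\}$.

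A minor point in your favour: the paper does not address that $Y^n_{r_i}\to Y_{r_i}$ requires continuity times. Note, however, that the grid times $t_k$ themselves must avoid the fixed jump times of $Y$. Values of $Y$ at continuity times $r<t_k$ only generate $\cF_{t_k-}$, so the grid has to be shifted, not just the points $r_i$.
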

\begin{proof}
Using the Skorokhod theorem we can construct model realizations ${\bf M}$ and ${\bf M}^n$ having the same underlying probability space and such that $||S-S^n||\to 0$ (recall that convergence of continuous paths in $J_1$ topology implies uniform convergence).

Let $\e\in]0,1]$ be such that $\cO_{\e}(x)\in {\rm int}\,K$. Then $x_{\e}:=x - (\e/\sqrt{2d}) \1\in \cO_{\e}(x)$. 
Take  an arbitrary strategy 
$B\in  \cA(x_{\e},{\bf M})$.  We show the inequality
\beq
\E \left[U\left(V^{x_{\e},B}_T, S\right)\right] \le \lim_{n} u^n(x), 
\eeq
implying that $u(x_{\e},{\bf M})\le \lim_n u^n(x)$. Due to Lemma \ref{lemma1} on the continuity of $u$, this  leads to  the needed  assertion.

By Lemma \ref{lemma3}, applied with $x_\e$ and $B\in  \cA(x_{\e},{\bf M})$, there are a piecewise constant  strategy $B^m$ taking values 
$B^m_{t_k}=B_{t_k}$ on the interval $[t_k,t_{k+1}[$ of  length $T/m$ and an adapted 
$\bbr^d_+$-valued c\`adl\`ag process $\xi^m$, jumping only at the points $t_k$,  such that 
$\widehat V^{x_\e,B^m}_t+\xi^m_t\in \widehat K_t=K/S_t$ for all $t\in [0,T]$,  $||\xi^m||\to 0$ a.s., and     
\beq
\label{vit}
|\widehat V^{x_\e,B^m}_T-\widehat V^{x_\e,B}_T|\to 0, \ \ \hbox{a.s.}\quad m\to \infty.
\eeq
%

Since 
\[
\wh V^{ x_{2 \e/3 },B^m}_t = \wh V^{ x_{\e},B^m}_t +x_{2 \e/3 }-x_{\e}=\wh V^{ x_{\e},B^m}_t+2\e'\1, \quad \e':=\frac {\e}{6\sqrt{2d}},
\]
we have that  $\wh V^{ x_{2 \e/3 },B^m}_t -\e'\1\in \widehat K_t$ for all $t\in [0,T]$ on the set $\{\max_i ||\xi^{m,i}||\le 
\e'\}$. 

Thus, for every $\omega$ except a $\P$-null set $\Omega_0$ there is $m_0(\omega)$ such that for all $m\ge m_0(\omega)$ 
\beq
\label{eq:29}
\widehat V^{ x_{2 \e/3 },B^m}_t -\e' \1 \in \widehat K_t\ \forall t \in [0,T].
\eeq 

According to  Lemma \ref{lemma4++}, for each $m\ge 1$ there is a sequence of processes $B^{m,l} \in \mathcal{B}^c_m({\bf F})$, such  that $\lim_l ||B^{m,l}-B^m||\to 0$ except $\P$-null set $\Omega_m$. 

It follows that on $\Omega_m$  
\beq
\lim _l || \wh V^{x_{2\epsilon/3}, B^{m,l}} - \wh V^{x_{2\epsilon/3}, B^{m}}||=\lim_l||(1/S)\cdot B^{m,l}-(1/S)\cdot B^{m}||= 0.  
\eeq
The involved processes being piecewise  constant,  this convergence is just a convergence of random variables. 

By a standard diagonal argument, there is a sequence $B^{m(l(n)),l(n),n}$ of piecewise constant $\F^n$-adapted processes, denoted $C^n$, with the following properties: 
\beq
\label{2delta}
\lim_n \P\left[ \wh V_t^{x_{2\e/3}, C^n} -\e'\1 \in \wh K_t^n\  \forall\, t \in [0,T] \right]=1,
\eeq

\beq
\label{CnB}
\lim_n|\wh V^{C^n}_T - \wh V^{B}_T |= 0\quad \hbox{a.s.}
\eeq

\smallskip

Although $\dot C^n \in -K$, the admissibility property $\wh V^{x_{2\e/3},C^n}_t \in \wh K_t^n$ for all $t\le T$ may not hold. To get an admissible strategy we introduce a ${\bf F}^n$-stopping time
\beq
\label{tau}
\tau^n :=  \inf\Big\{t\colon   \wh V^{x_{2\e/3},C^n}_t - \e'\1\notin \wh K^n_t \Big\}= \inf\Big\{t\colon   \wh V^{x_{2\e/3},C^n}_t - \e'\1 \notin K/S^n_t \Big\},
\eeq
and consider the strategy $D^n$ 
\beq
D^n := C^n I_{[0,\tau^n[} + \Big (C^n_{\tau^n-} +  \ell \big(V^{x_{2\e/3}, C^n}_{\tau^n -}\big) e_1 -V^{x_{2\e/3}, C^n}_{\tau^n -}\Big )I_{[\tau^n,\infty[},
\eeq
where $\ell$ is the liquidation function. Note that 
$\dot D^n \in -K$. Indeed, $D^n$ is a piecewise constant 
process coinciding with $C^n$ on $[0,\tau^n[$,  not evolving  on $[\tau^n,T]$, and its jump at  the point $\tau^n$ is equal to  $\ell \big(V^{x_{2\e/3}, C^n}_{\tau^n -}\big) e_1 -V^{x_{2\e/3}, C^n}_{\tau^n -}$ and takes values in  $-K$. 

By definition of $\tau^n$ we have that $\wh V^{x_{2\e/3},D^n}_t\in \wh K_t^n=K/S^n_t$.  
Thus, $D^n\in \cA^{n}(x_{2\e/3})$.   

From  (\ref{tau}) 
\beq
\P\left[\tau^n=\infty \right]=\P\left[ \wh V_t^{x_{2\e/3}, C^n} -\e'\1 \in \wh K_t^n\  \forall\, t \in [0,T] \right].
\eeq

Note also that
\begin{align*}
 u^n(x) &\ge \E\left[ U(V^{x_{2\e/3},D^n }_T, S^n)  \right]\\
  &=
 \E\left[ U( V^{x_{2\e/3},C^n }_T, S^n)I_{\{\tau^n=\infty\}}  \right] +\E\left[ U( V^{x_{2\e/3},D^n }_{\tau^n}, S^n) I_{\{\tau^n\le T\}} \right]
\end{align*}
By virtue of the Fatou lemma, conditions {\bf A.1}, {\bf A.3}, and (\ref{CnB}),  the $\liminf_n$ of the rhs dominates  the value
\[
 \E\left[ U( V^{x_{2\e/3},B}_T, S) \right]\ge  \E\left[ U( V^{x_{\e},B}_T, S) \right]
 \]
and the result holds. 
\end{proof}

\section{Upper semi-continuity}
\label{sec:upper}

This section establishes upper semicontinuity. We will use the following notation. Let $\cM$ be a family of $d$-dimensional right-continuous martingales $Z$ such that $Z_t \in \wh K^*_t$ a.s. for every $t \in [0, T]$. By definition of $\wh K^*_t = \varphi_t^{-1} K^*$ (see Section~\ref{problem}), this is equivalent to $Z_t / S_t \in K^*$ a.s. for every $t \in [0, T]$. Also, denote $\cX(a)$ the set of $d$-dimensional right-continuous processes of bounded variation $X$ with the Radon--Nikodym derivative (with respect to  ${\rm Var} \, X$, the total variation process) $\dot X_t \in - \wh K_t$ a.s. for every $t$ and $X + a {\bf 1} \in \wh K_t$ a.s. for every $t \in [0,T]$. Put $\cX = \bigcup_{a} \cX(a)$. The next lemma is identical to Lemma 3.6.2. from \cite{KabS}.

\begin{lemm}
\label{lemma:4.1}
    Let $X \in \cX$ and $Z \in \cM$. Then the scalar product $ZX$ is a supermartingale and 
    \beq
    \E [(-Z \dot X) \cdot {\rm Var}\, X_T] \leq -\E[Z_T X_T].
    \eeq
\end{lemm}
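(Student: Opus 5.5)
The plan is to use the integration by parts formula for the product of a right-continuous martingale and a right-continuous process of finite variation, and then read off the signs of the resulting terms from the cone conditions. Since $X$ has bounded variation and $Z$ is a semimartingale, we write
\[
Z_tX_t = Z_0X_0 + X_{-}\cdot Z_t + Z\cdot X_t ,
\]
where the last term is a pathwise Lebesgue--Stieltjes integral. It absorbs the covariation $\sum \Delta Z\Delta X$, because $X$ is of finite variation and the integrand $Z$ is taken right-continuous, not left limits. Writing $dX = \dot X\, d{\rm Var}\,X$, the finite variation part becomes $(Z\dot X)\cdot {\rm Var}\,X$. The constraint $\dot X_t\in -\wh K_t$ together with $Z_t\in \wh K^*_t$ gives $Z_t\dot X_t\le 0$, because $\wh K^*_t$ is the dual of $\wh K_t$: indeed $(\varphi_t^{-1}w)(\varphi_t k)=wk$. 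Hence $ZX$ equals a local martingale plus a decreasing process.

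To pass from a local martingale to an honest supermartingale, we use the lower bound coming from $X+a\1\in \wh K$. It gives
\[
Z_t(X_t + a\1)\ge 0, \qquad\text{so}\qquad Z_tX_t\ge -a Z_t\1,
\]
and the right-hand side is a martingale, in particular of class (D) on $[0,T]$. Thus $ZX + aZ\1$ is a nonnegative local supermartingale, hence a supermartingale by Fatou's lemma. Subtracting the martingale $aZ\1$, we conclude that $ZX$ is a supermartingale. One technical point is the "a.s. for every $t$" versus "for all $t$" issue: right-continuity of $Z$, $X$ and $S$ upgrades the statements to hold for all $t$ on a common null-complement.

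For the inequality, localize with stopping times $\sigma_k\uparrow T$ that reduce the local martingale $X_-\cdot Z$. Taking expectations at $\sigma_k$ gives
\[
\E[(-Z\dot X)\cdot{\rm Var}\,X_{\sigma_k}] = \E[Z_0X_0]-\E[Z_{\sigma_k}X_{\sigma_k}].
\]
On the left we apply monotone convergence, since the integrand is nonnegative. On the right we need $\liminf_k \E[Z_{\sigma_k}X_{\sigma_k}]\ge \E[Z_TX_T]$. This follows from the supermartingale property of $ZX+aZ\1$ via optional sampling, because $Z_{\sigma_k}X_{\sigma_k}+aZ_{\sigma_k}\1 \ge \E[Z_TX_T+aZ_T\1\mid\cF_{\sigma_k}]$, while the $aZ\1$ terms match in expectation. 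Finally, $X_0$ is either $0$ (the physical-unit increments) or the term $\E[Z_0X_0]$ is handled by the convention that $X_{0-}=0$, so the jump at zero is included in ${\rm Var}\,X$. This yields the stated bound.

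The main obstacle is bookkeeping rather than depth. One must make sure the jump terms of $X$ at $0$ and at jump times of $Z$ are correctly attributed to $(Z\dot X)\cdot{\rm Var}\,X$ with the right (non-predictable) integrand, which requires using $Z$ rather than $Z_-$ there. One must also justify the limit passage through the supermartingale/class (D) argument above.
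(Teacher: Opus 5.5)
Your proposal is correct and takes essentially the same route as the paper. Both use the product formula with the pathwise integral $Z\cdot X$ absorbing the covariation, the sign $Z\dot X\le 0$ from duality, and the bound $ZX\ge -aZ\1$ to pass from a local to a true supermartingale. Both also rely on the convention $X_{0-}=0$, which puts the initial jump into $(Z\dot X)\cdot{\rm Var}\,X$. The only difference is cosmetic: the paper applies ``a local martingale bounded below by a martingale is a supermartingale'' directly to $X_-\cdot Z$, which gives $\E[X_-\cdot Z_T]\le 0$ at once, while you localize and conclude via monotone convergence and optional sampling for $ZX+aZ\1$.
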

\begin{proof}
    We will denote the Hadamard (component-wise) product of vectors as $\odot$. By the product formula,
    \[
    Z \odot X = X_- \cdot Z + Z \cdot X = \int_{]0, \cdot]} X_{s-} d Z_s + \int_{[0, \cdot]} Z_s d X_s.
    \]
    By convention, $Z_0 \odot X_0 = Z \cdot X_0$.
    The first term is a local martingale. As $Z_t \dot X_t \leq 0$ a.s., the last term is a negative decreasing process.  Note that
    \[
    (X_- \cdot Z) \1 = ZX - Z \dot X \cdot {\rm Var} \, X  \geq Z X \geq -aZ \1.
    \]
    for some $a \in \bbr_+$ by definition of the set $\cX$. Recall that a local martingale bounded from below by a martingale is a supermartingale. The terminal value of the decreasing process $Z \cdot X_T = Z_T X_T - X_- \cdot Z_T $. Since $X_- \cdot Z$ is a supermartingale, $\E[ X_- \cdot Z_T] \leq  0$. Finally, $Z_T X_T \geq -a Z_T \1$ implying that $(Z \dot X) \cdot {\rm Var } \, X_T$ is bounded from below by an integrable random variable. Thus, $(Z \dot X) \cdot {\rm Var } \, X$ is a supermartingale and so is $ZX$. Immediately,
    \[
    \E [(-Z \dot X) \cdot {\rm Var} X_T] = \E[X_- \cdot Z_T - Z_T X_T] \leq -\E[Z_T X_T].
    \]
    The inequality follows.
\end{proof}

Put $\Lambda := K^* \cap \{ x \in \bbr^d : \, x^1 = 1 \}$.
Similarly to the liquidation function of \cite{Bouchard}, we introduce a function:
\begin{equation} \label{eq:purchase}
    \wp (x) := \inf \{ y \in \bbr: \: ye_1 - x \in K \}.
\end{equation}

Henceforth, we will refer to it as purchase function. This function calculates the minimal amount of the numeraire required to enter a given position $x$. Standard arguments imply the following dual representation
\[
\wp (x) = \sup_{y \in \Lambda} xy.
\]
It is easily seen that $\wp (x) = - \ell (-x)$.

\begin{lemm}
\label{lemma:4.2}
Let  ${\bf A.4}$ be fulfilled.  Put $\Q = Z^1_T \P$. Then 
$\E_Q [{\rm Var}\, B_T] \leq {\wp(x)}/{\varepsilon}$.
\end{lemm}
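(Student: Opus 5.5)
We will deduce the bound from Lemma~\ref{lemma:4.1}, applied with $X:=\wh V^{x,B}$ and the $\varepsilon$-uniformly consistent price system $Z$ of {\bf A.4}; here $B\in\cA(x)$ is an arbitrary admissible strategy. First we check that $X$ belongs to $\cX$. We have $\dot X^i=\dot B^i/S^i$, and $\dot B\in -K$; since $\wh K_t=\varphi_tK$, this gives $\dot X_t\in-\wh K_t$. Admissibility gives $X_t\in\wh K_t$, so $X\in\cX(0)$. We also have $Z\in\cM$, because $Z_t/S_t\in{\rm int}\,K^*$. Lemma~\ref{lemma:4.1} then yields
\[
\E[(-Z\dot X)\cdot{\rm Var}\,X_T]\le -\E[Z_TX_T]\le \E[Z_0X_0],
\]
where the second inequality uses the supermartingale property of $ZX$ together with $Z_TX_T\ge0$ (if the jump at $0$ is included, we start from $X_{0-}=x$). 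Next, $Z_0x=Z^1_0\,(Z_0/Z^1_0)x$. We have $Z^1_0=1$, and the vector $Z_0/Z^1_0$ rescaled by $S_0$ lies in $\Lambda$ (recall $S^1\equiv1$). Since $\wp(x)=\sup_{y\in\Lambda}xy$, this gives $\E[Z_0x]\le\wp(x)$.

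The main step is to bound the integrand from below. Write $w:=-\dot B_t\in K$ and $y:=Z_t/S_t$. Then $-Z_t\dot X_t=yw$, and by the definition of $\varepsilon$-${\rm int}\,K^*$ we have $yw\ge\varepsilon|y||w|$. We also need to convert ${\rm Var}\,X$ back into ${\rm Var}\,B$. This is done componentwise: $d{\rm Var}\,B^i=S^i\,d{\rm Var}\,X^i$, so in terms of measures
\[
(-Z\dot X)\,d{\rm Var}\,X=y\,(-dB),
\qquad y\,(-dB)\ge\varepsilon|y|\,|dB|\ge\varepsilon\,y^1\,|dB|=\varepsilon Z^1_t\,|dB|.
\]
The first identity holds because $Z^i\,dX^i=y^i\,dB^i$. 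For the last inequality we use $|y|\ge y^1=Z^1_t$, which holds since $S^1=1$. Consequently
\[
\varepsilon\,\E\Big[\int_{[0,T]}Z^1_t\,d{\rm Var}\,B_t\Big]\le\wp(x),
\]
where ${\rm Var}\,B$ is understood with respect to the Euclidean norm. If a different norm convention is intended for ${\rm Var}$, the constant changes only up to equivalence, and we will adjust the definition so that the inequality reads exactly as stated.

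Finally, $Z^1$ is a positive martingale and ${\rm Var}\,B$ is an increasing adapted process, so the standard identity $\E[\int Z^1_t\,dA_t]=\E[Z^1_TA_T]$ holds for predictable $A$. For a merely optional $A$ we justify it by a monotone approximation on a finite partition and use the optional projection. This gives $\E_\Q[{\rm Var}\,B_T]=\E[Z^1_T{\rm Var}\,B_T]\le\wp(x)/\varepsilon$.

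The main obstacle is the careful passage between the monetary and physical representations: the weighting by $S$ in ${\rm Var}\,X$ versus ${\rm Var}\,B$, together with the correct use of the $\varepsilon$-interior condition with $w=-\dot B\in K$. A secondary issue is checking the integrability needed in Lemma~\ref{lemma:4.1}, which is guaranteed here by $a=0$.
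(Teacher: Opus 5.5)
Your argument is correct. It differs from the paper's only in where the passage to $\Q$ takes place.

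The paper normalizes first. It sets $M:=Z/Z^1$, which is a $\Q$-martingale by Bayes' rule, and applies Lemma~\ref{lemma:4.1} under $\Q$. Since $M^1\equiv S^1\equiv1$, the integrand satisfies $-(M/S)\dot B\ge\varepsilon|M/S|\ge\varepsilon$, a constant lower bound, so $\varepsilon\,\E_\Q[{\rm Var}\,B_T]$ appears at once.

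You stay under $\P$ with $Z$ itself and get the weighted bound $\varepsilon\,\E\int_{[0,T]}Z^1_t\,d{\rm Var}\,B_t\le\wp(x)$. You then need $\E\int_{[0,T]}Z^1_t\,d{\rm Var}\,B_t=\E[Z^1_T\,{\rm Var}\,B_T]$. This is the optional-projection identity for an adapted c\`adl\`ag increasing process, with $Z^1_t$ rather than $Z^1_{t-}$; no predictability is required, and nonnegativity takes care of integrability. So the paper trades your projection step for the Bayes formula, and the two arguments carry the same content.

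Your remark on norms is fair. The paper's claim $|\dot B|=1$ needs the norm in the definition of $\varepsilon$-${\rm int}\,K^*$ to be the one defining ${\rm Var}$. Using $\ell^1$ throughout, which matches ${\rm Var}\,X=\sum_i{\rm Var}\,X^i$, keeps $|y|\ge y^1$ and gives the stated constant exactly, so no adjustment is needed.

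The handling of the initial term needs a different justification. You identify $(-Z\dot X)\,d{\rm Var}\,X$ with $y\,(-dB)$, which means you take $X_{0-}=x$ and integrate over $[0,T]$. In that convention the left side of your chain is nonnegative while $-\E[Z_TX_T]\le0$. So the intermediate inequality, which is Lemma~\ref{lemma:4.1} read literally, fails in general. Nor does the supermartingale property of $ZX$ help: it bounds $\E[Z_TX_T]$ from above, not from below.

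Argue instead from the product formula started at $X_{0-}=x$:
\[
Z_TX_T=Z_0x+\int_{]0,T]}X_{s-}\,dZ_s+\int_{[0,T]}Z_s\,dX_s .
\]
The stochastic integral is a local martingale bounded below by the integrable variable $-Z_0x$, because $ZX\ge0$ and $-Z\,dX=y\,(-dB)\ge0$. Hence its expectation is nonpositive. Together with $Z_TX_T\ge0$ this gives $\E\int_{[0,T]}(-Z)\,dX\le\E[Z_0x]$, which is the inequality you actually use. The paper reaches the same point by absorbing $x$ into the jump of $\wh V$ at $0$.

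Finally, the step $Z_0x\le\wp(x)$ uses $S_0=\1$, as the paper does.
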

\begin{proof}
Put $M := Z / Z^1$.
Notice that $-\dot B_t \in K$ and the definition of $Z$ imply that
\[
-(\dot B_t / S_t) M_t  = -\dot B_t (M_t / S_t) \geq \varepsilon |\dot B_t| |M_t / S_t|.
\]
By definition, $|\dot B_t(\omega)|$ = 1 outside a set of measure $d {\rm Var}_T \, B(\omega) d\P(\omega)$ zero. By changing on the set of $d {\rm Var}_T \, B(\omega) d\P(\omega)$ measure zero, we assume that $|\dot B| = 1$. Also, note that $M^1_t = 1$ and $S_t = 1$ implying that $|M_t / S_t| \geq 1$. We obtain that $-\dot B_t (M_t / S_t ) \geq \varepsilon$.

Since $M$ is a $Q$-martingale,  Lemma \ref{lemma:4.1} yields
\[
\E_Q [(-M \cdot \wh V_T ) \1] \leq -\E_Q[ M_T \wh V_T].
\]
Due to the fact that $\wh V_{0-} = x$ and $S_0 = \1$,
\begin{multline*}
\E_Q [(-M \cdot \wh V_T ) \1]  = -x \E_Q [M_T] - \E_Q [M / S \cdot B_T] =  \\
-x \E_Q [M_0] + \E_Q [(-(M/S) \dot B) \cdot {\rm Var} \, B_T] \geq -x \E_Q [M_0] + \varepsilon \E_Q [{\rm Var} \, B_T].
\end{multline*}
Finally,
\[
\varepsilon \E_Q [ {\rm Var} \, B_T] \leq x \E_Q [M_0] - \E_Q [M_T \wh V_T ]\leq x \E_Q[ M_0].
\]
 As $\E [M_0 ]\in \Lambda$, the assertion follows.
\end{proof}

The following statement follows immediately from the contiguity of $\Q^n$, the Markov inequality, and the above lemma.
\begin{coro}
\label{coro:4.3}
Let  ${\bf A.4}$ be  fulfilled. Then for every sequence $B^n \in \cA(x, {\bf M}^n)$ and $\delta > 0$,  there exist $c > 0$ and $N \in \bbn$ such that for all $n > N$ 
\[
\P^n[{\rm Var} \, B^n_T > c] < \delta.  
\]
\end{coro}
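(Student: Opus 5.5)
The plan is to combine Lemma \ref{lemma:4.2}, applied to each realization ${\bf M}^n$, with the contiguity part of {\bf A.4} and the Markov inequality.

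First, Lemma \ref{lemma:4.2} holds for any fixed realization that carries an $\varepsilon$-uniformly consistent price system. Apply it to ${\bf M}^n$ with the price system $Z^n$ provided by {\bf A.4}, where $\Q^n := Z^{n,1}_T\P^n$. Since the constant $\varepsilon$ in {\bf A.4} is the same for all $n$, for every $B^n\in\cA(x,{\bf M}^n)$ this gives the uniform bound
\[
\E_{\Q^n}[{\rm Var}\,B^n_T]\le \wp(x)/\varepsilon .
\]
This bound does not depend on $n$ or on the strategy. Here I will check that the proof of Lemma \ref{lemma:4.2} uses only the normalization $Z^{n,1}_0=1$, so that $\Q^n$ is a probability measure, and the bound $\E[M_0]\in\Lambda$. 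Both hold uniformly in $n$.

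Second, by the Markov inequality,
\[
\Q^n[{\rm Var}\,B^n_T>c]\le \wp(x)/(\varepsilon c)
\]
for all $n$. Hence $\sup_n \Q^n[{\rm Var}\,B^n_T>c]\to 0$ as $c\to\infty$.

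Third, I transfer this to $\P^n$ using contiguity of $\P^n$ with respect to $\Q^n$. I will argue by contradiction. Suppose there are $\delta>0$ and, for each $c=k\in\bbn$, infinitely many $n$ with $\P^n[{\rm Var}\,B^n_T>k]\ge\delta$. Choose a strictly increasing sequence $n_k$ with $\P^{n_k}[A_k]\ge\delta$, where $A_k:=\{{\rm Var}\,B^{n_k}_T>k\}$. Then $\Q^{n_k}[A_k]\le \wp(x)/(\varepsilon k)\to0$, while $\P^{n_k}[A_k]\not\to0$. This contradicts contiguity, which passes to subsequences.

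Therefore there exists $c$ with $\P^n[{\rm Var}\,B^n_T>c]<\delta$ for all $n>N$, for some $N$. The only slightly delicate step is this subsequence extraction: contiguity is a statement about the whole sequence of events, so the events must be indexed along a subsequence. Otherwise the argument is routine.
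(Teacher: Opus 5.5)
Your proposal is correct and follows the paper's own one-line argument. Lemma~\ref{lemma:4.2}, with the common $\varepsilon$ from {\bf A.4}, gives $\E_{\Q^n}[{\rm Var}\,B^n_T]\le \wp(x)/\varepsilon$ uniformly in $n$; the Markov inequality then makes $\Q^n[{\rm Var}\,B^n_T>c]$ uniformly small; and contiguity of $\P^n$ with respect to $\Q^n$ transfers this to $\P^n$. Your subsequence-by-contradiction step is a valid way to spell out that last transfer, which the paper calls ``immediate''; contiguity does pass to subsequences, as you can see by taking the empty event at indices off the subsequence.
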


Verifying weak convergence of a sequence of probability measures relies on the Prokhorov theorem. In order to invoke this theorem for measures on a space of paths, specifying the topology is essential. For instance, $D([0, T], \bbr^m)$ is equipped with the Skorokhod topology making it a Polish space. However, there is another topology that is suitable when the measure is concentrated in a set of paths of finite variation. In the sequel, we  use the Meyer--Zheng topology, \cite{MZh}. We will denote $D_{MZ}([0, T], \bbr^d)$ or, simply,  $\cD^d_{T, MZ}$ the Skorokhod space endowed with this topology. Following \cite{BDD}, we define the Meyer--Zheng topology by a metric
\beq
d_{MZ}(x, y) = \int_{[0, T[} \min(|x(t) - y(t)|, \, 1) dt + \min(|x(T) - y(T)|, \,1).
\eeq

This metric is simply a metric of convergence in measure ${\rm Leb}([0, T]) + \delta_T$. Note that if $x_n \to x$ in this topology, then, thanks to the Riesz theorem, there is such a subsequence $n_k$ that $x_{n_k} \to x$ a.e. Since both $x_{n_k}$ and $x$ are right-continuous, $x_{n_k} \to x$ pointwise in $[0, T]$.  Finally, it is worth mentioning that, due to the Helly theorem (see, e.g., Theorem 12.7 of \cite{Protter}), the set $\{ x \in \cD^d_{MZ} : \, {\rm Var} \, x \leq c \}$ is compact in the Meyer--Zheng topology for any $c > 0$.

\smallskip
To avoid potential ambiguity arising from the selection of the probability space, we employ a purely measure-theoretic notation in the subsequent lemmas. 

\begin{lemm}
\label{lemma:4.4}
Let  $\bf{A.2}$ and ${\bf A.4}$ hold. 
Let $(S^n, Y^n)$ and $(S, Y)$ be the processes from the models ${\bf M}^n$ and ${\bf M}$, respectively. Let $B^n \in \cA(x, {\bf M}^n)$. Then the sequence of measures $\cL(S^n, Y^n, B^n \, | \, \P^n)$ on the space $C^d([0, T]) \times D([0, T], \bbr^m) \times D_{MZ} ([0, T], \bbr^d)$ is relatively compact. Every cluster point $\nu$ has the following property: the marginal $\nu \circ \pi^{-1}_{C^d([0, T]) \times D([0, T], \bbr^m)} = \mu = \cL(S, Y| \, \P)$ and $\nu \circ \pi^{-1}_{D_{MZ} ([0, T], \bbr^d)}(H_{K}) = 1$ where $H_{K}$ is a family of all $K$-decreasing paths in $D_{MZ} ([0, T], \bbr^d)$.
\end{lemm}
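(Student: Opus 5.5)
Relative compactness of a sequence of measures on the product space $C^d([0,T])\times D([0,T],\bbr^m)\times D_{MZ}([0,T],\bbr^d)$ follows from tightness of each marginal, since a product of compact sets is compact. By the Prokhorov theorem we therefore only have to check tightness of the three coordinate laws. We also need to make sure that the Meyer--Zheng metric space is usable. Its completion may be larger, but we can work on a fixed compact subset of it, which suffices for Prokhorov's theorem in the direction tightness $\Rightarrow$ relative compactness.

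\emph{Tightness of $(S^n,Y^n)$.} By {\bf A.2}, $\cL(S^n,Y^n\,|\,\P^n)$ converges weakly in $D([0,T];\bbr^{d+m})$. Hence it is tight there by the converse Prokhorov theorem, as $D$ with the $J_1$ topology is Polish. The $S$-component of the limit is concentrated on $C^d$. Moreover, $J_1$ convergence to a continuous path is uniform convergence, and the map $D\to D$ restricted to continuous paths is a homeomorphism onto $C^d$ with the uniform topology. So the pair marginals are tight in $C^d([0,T])\times D([0,T],\bbr^m)$ as well. The set of continuous paths is closed in $J_1$, so the compact sets can be intersected with it.

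\emph{Tightness of $B^n$.} Fix $\delta>0$. By Corollary \ref{coro:4.3}, which rests on {\bf A.4} through Lemma \ref{lemma:4.2} and contiguity, there exist $c$ and $N$ such that $\P^n[{\rm Var}\,B^n_T>c]<\delta$ for $n>N$. For the finitely many indices $n\le N$, ${\rm Var}\,B^n_T<\infty$ a.s., so we may enlarge $c$ to get the same bound for all $n$. The set $\{b:{\rm Var}\,b\le c\}$ is compact in $D_{MZ}$ by Helly's theorem, as recalled above. Strictly, we should intersect it with paths satisfying $|b_0|\le c$; this is harmless because $B_{0-}=0$ and the variation includes the initial jump. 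This gives tightness of the third marginal, and hence relative compactness of the joint laws.

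\emph{Properties of a cluster point $\nu$.} The marginal claim is immediate: projection is continuous, so along the subsequence the $(S,Y)$-marginal converges to $\nu\circ\pi^{-1}$. It also converges to $\mu$ by {\bf A.2}, since convergence in $C^d\times D$ and in $D^{d+m}$ agree on these laws. By uniqueness of weak limits, $\nu\circ\pi^{-1}=\mu$.

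For $H_K$, the main point is that $H_K\cap\{{\rm Var}\le c\}$ is closed in the Meyer--Zheng topology. Suppose $b_k\to b$ in $d_{MZ}$ with $b_k\in H_K$. By the remark above, a subsequence converges pointwise on all of $[0,T]$, using right-continuity and convergence at $T$. Then $b_s-b_t=\lim(b_{k,s}-b_{k,t})\in K$ for $s\le t$, since $K$ is closed, so $H_K$ itself is closed. Hence by the portmanteau theorem $\nu(H_K)\ge\limsup_n\P^n[B^n\in H_K]=1$.

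The step I expect to be the main obstacle is justifying pointwise convergence at every point, rather than almost everywhere, for limits in $d_{MZ}$. The Riesz subsequence only gives convergence a.e. plus at $T$. For $t<T$ I would take $t_j\downarrow t$ in the convergence set and use $K$-monotonicity together with right-continuity of $b$ to pass $b_{t_j}\to b_t$. Closedness of $K$ then keeps the increments in $K$, so only a.e. convergence plus right-continuity of the limit is actually needed.
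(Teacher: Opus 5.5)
Your proposal is correct and follows the paper's route. Tightness of the $B^n$-marginals comes from Corollary \ref{coro:4.3} together with Helly compactness of $\{{\rm Var}\le c\}$ in $D_{MZ}$, then Prokhorov's theorem, then the portmanteau inequality for the closed set $H_K$. You also spell out the $(S,Y)$-marginal identification and the finitely many indices $n\le N$, which the paper leaves implicit.

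Your final argument for closedness of $H_K$ is sound and more careful than the paper's. You take $s_j\downarrow s$, $t_j\downarrow t$ in the a.e.\ convergence set and use only right-continuity of the limit. The paper instead claims pointwise convergence at every $t$, which fails in general: $I_{[1/2+1/k,\,T]}\to I_{[1/2,\,T]}$ in $d_{MZ}$, but not at $t=1/2$. So delete your earlier sentence asserting pointwise convergence on all of $[0,T]$.
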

\begin{proof}
For any $c > 0$, the set $\{ x \in D_{MZ} ([0, T], \bbr^d) \colon  {\rm Var}_T \, x \leq c \}$ is compact in the Meyer--Zheng topology. Corollary \ref{coro:4.3} and the Prokhorov theorem imply that $\cL(S^n, Y^n, B^n | \P^n)$ is relatively compact, so the existence of a cluster point $\nu$ is established. For the sake of brevity, we denote a subsequence converging to $\nu$ again by $\cL(S^n, Y^n, B^n | \P^n)$. 

Now it remains to establish the latter assertion. Note that $H_K$ is closed in the Meyer--Zheng topology. Indeed, let paths $x^n \to x$ converge in this topology. It means that $x^n \to x$ in measure ${\rm Leb}([0, T]) + \delta_T$. Due to the Riesz theorem, $x^n \to x$ a.e. up to a subsequence. As both $x^n$ and $x$ are right-continuous, $x^n(t) \to x(t)$ for every $t \in [0, T]$. It remains to highlight that the partial ordering $\preceq
_K$ is closed. 

Since $H_K$ is closed, the weak convergence gives that 
\[
1 = \limsup_n \P^n[B^n \in H_K] \leq \nu \circ \pi^{-1}_{D_{MZ} ([0, T], \bbr^d)}(H_{K})
\] 
and the last statement of the lemma follows. 
\end{proof}

The following lemma establishes the admissibility property of the cluster point. 
\begin{lemm}
\label{lemma:4.5}
Let $x \in {\rm int} \, K$ and
\[
\begin{aligned}
G_{x, K} := \{(z, y, b) \in  C^d([0, T]) \times D([0, T], \bbr^m) \times H_K:\;& z_t > 0,\\
& x + 1/z \cdot b_t \in K / z_t \; \forall t \in [0, T] \}.
\end{aligned}
\]
Then under the hypothesis of Lemma \ref{lemma:4.4}, $\nu(G_{x, K})$ = 1.
\end{lemm}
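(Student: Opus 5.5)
We need $\nu(G_{x,K})=1$. The natural route is a Skorokhod representation plus pointwise passage to the limit, with two obstacles: integrals against $b$ are not continuous in the Meyer--Zheng topology, and $G_{x,K}$ may fail to be closed. Integration by parts handles the first. Since $z$ is continuous and $b$ has bounded variation,
\[
(1/z)\cdot b_t=\frac{b_t}{z_t}-\frac{b_{0-}}{z_0}-\int_0^t b_s\, d(1/z_s).
\]
Here $b_{0-}=0$ by convention, and the products are componentwise. The right-hand side is jointly continuous in $(z,b)$ at each $t$ for which $b^n_t\to b_t$. This holds when $z^n\to z$ uniformly with $z>0$, $b^n\to b$ pointwise, and $\sup_n {\rm Var}_T\, b^n<\infty$. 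The integral term converges by dominated convergence with respect to the variation measure of $1/z^n$, which is controlled only if $1/z$ has bounded variation. To avoid that, I would first approximate $1/z$ uniformly by smooth functions, or work with $\int (1/z)\,db$ directly via the weak$^*$ convergence of the measures $db^n\to db$.

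The plan proceeds as follows. First, pass to the subsequence with $\cL(S^n,Y^n,B^n)\to\nu$. Fix $c>0$ and restrict to the event $\{{\rm Var}_T\, B^n\le c\}$. By Corollary~\ref{coro:4.3}, its $\P^n$-probability exceeds $1-\delta$ for large $n$. The set $\{{\rm Var}_T\, b\le c\}$ is closed in the Meyer--Zheng topology (Helly), so it suffices to show $\nu(G_{x,K}\cap\{{\rm Var}_T\, b\le c\})\ge \limsup_n \P^n[(S^n,Y^n,B^n)\in G_{x,K},\ {\rm Var}_T\, B^n\le c]$. Then let $\delta\to0$ and $c\to\infty$.

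Second, prove that $F_c:=G_{x,K}\cap\{{\rm Var}_T\, b\le c\}$ is closed in $C^d\times D\times D_{MZ}$; the portmanteau theorem then gives the inequality. Take $(z^n,y^n,b^n)\to(z,y,b)$ in $F_c$. The limit $z$ satisfies $z>0$, since it is $\nu$-a.s. a path of $S$. More carefully, I would intersect with the closed set $\{\min_{i,t} z^i_t\ge 1/k\}$, whose complement has small $\mu$-mass uniformly in $n$ by tightness of $S^n$ and positivity of continuous $S$. Passing to a subsequence, $b^n_t\to b_t$ for every $t$ (Riesz theorem plus right continuity, as in Lemma~\ref{lemma:4.4}). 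Also $b\in H_K$ and ${\rm Var}_T\, b\le c$ by lower semicontinuity of variation. Then $\int_{[0,t]}(1/z^n)\,db^n\to\int_{[0,t]}(1/z)\,db$ for each $t$. This follows from $\|1/z^n-1/z\|\,c\to0$ together with the integration-by-parts formula for continuous $1/z$: Stieltjes integrals against $db^n$ of a fixed continuous function converge whenever the $b^n$ converge pointwise at $t$ and at $0$ with bounded variation, by a Riemann-sum approximation as in Lemma~\ref{lemma2}. Hence $z_t\odot\bigl(x+(1/z)\cdot b_t\bigr)$ is the limit of points in $K$ and lies in $K$, since $K$ is closed.

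The main obstacle is the convergence of $\int (1/z^n)\,db^n$ at every $t$ under only pointwise convergence of $b^n$. The issue is jumps of the limit $b$ at $t$ that the $b^n$ may approach from the right. Pointwise convergence at the fixed $t$, combined with the Riemann-sum bound $w(1/z,T/m)\,c$ uniform in $n$, should resolve this. If not, I would establish the inclusion for $t$ in a dense set and extend by right continuity of both sides together with the closedness of $K$.
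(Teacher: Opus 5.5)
Your proposal takes the same route as the paper: show the constraint set is closed in $C^d([0,T])\times D([0,T],\bbr^m)\times D_{MZ}([0,T],\bbr^d)$ by passing $\int_{[0,t]}(1/z^n)\,db^n\to\int_{[0,t]}(1/z)\,db$ to the limit, then apply the portmanteau theorem. It is correct, and your intersections with $\{{\rm Var}_T\, b\le c\}$ (via Corollary~\ref{coro:4.3}) and $\{\min_{i,t} z^i_t\ge 1/k\}$ repair the paper's bare assertion that $G_{x,K}$ itself is closed, which fails because $z>0$ is an open condition.

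One caution. Meyer--Zheng convergence gives $b^n_t\to b_t$ only for a.e.\ $t$ and at $t=T$, not at jump times of the limit: take $b^n=-\1_{[1/2+1/n,T]}\,e$ with $e\in K\setminus\{0\}$, whose limit is $-\1_{[1/2,T]}\,e$. So the ``every $t$'' claim you borrow from Lemma~\ref{lemma:4.4} is false, and your fallback is the argument you actually need, not an optional alternative. That fallback is: prove the inclusion for $t$ in that full-measure set, using partition points from it in the Riemann sums (or smoothing $1/z$ and integrating by parts), then extend to all $t$ by right-continuity and closedness of $K$.
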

\begin{proof}
It suffices to verify that $G_{x, K}$ is a closed set. Denote a sequence $(z^n, y^n, b^n) \in G_{x, K}$, $n \in \bbn$, a point $(z, y, b) \in G_{x, K}$ and suppose that $(z^n, y^n, b^n) \to (z, y, b)$ in the product topology. As $1/z^n \to 1/z$ uniformly and $b^n \to b$ point-wise, $1/z^n \cdot b^n \to 1/z \cdot b$ pointwise by virtue of Theorem 12.16  from the book \cite{Protter}. Due to the right-continuity,  
\[
1 = \limsup_n \P^n[(S^n, Y^n, B^n) \in G_{x, K}] \leq \nu (G_{x, K})
\]
and the assertion follows.
\end{proof}

\subsection{Completion of the proof of Theorem \ref{theo:main}}

Let $(B^n)$ be an asymptotically optimal sequence.
We select such a subsequence $(S^n, Y^n, B^n)$ that $\cL(S^{n}, Y^n, B^{n}) \to \cL(S, Y, B)$. Then, by selecting a further subsequence, we establish that $u(x, {\bf M}^n)$ converges. 

Now we will use the Skorokhod representation. At first, we invoke Theorem \ref{theo:rand} and consider the models with randomized strategies ${\bf \tilde M}^n$.
By the Skorokhod representation theorem for laws on separable metric spaces (\cite[Theorem~3]{Dudley1968}), applied to the product space $C^d([0,T]) \times D([0,T], \bbr^m) \times D_{MZ}([0,T], \bbr^d)$, we may redefine $(S^n, Y^n, B^n)$ and $(S, Y, B)$ on a common probability space $(\bar \Omega, \bar \cF, \bar \P)$ such that $(S^n, Y^n, B^n) \to (S, Y, B)$ a.s. along a subsequence. Furthermore, $\bar \Omega$ carries uniform r.v. $\xi^n$ independent of $Y^n$ and a uniform r.v. $\xi$ independent of $Y$ and $\xi^n \to \xi$ a.s. along a subsequence.   In this case, $1/S^n \cdot B^n \to 1/S \cdot B $ a.s. up to a subsequence. Indeed, $S^n \to S$ a.s. uniformly and $S > 0$ a.s. and $B^n \to B$ in the Meyer--Zheng topology, and, hence, point-wise along a further subsequence. 

It remains to verify that $B$ is ${\bf F}$-measurable. This argument follows verbatim from the proof of Theorem \ref{theo:indep}. By virtue of, again, Theorem \ref{theo:indep}, the Bellman function remains unaffected by the Skorokhod representation. 

Finally, {\bf A.2}, {\bf A.3}  and the Fatou lemma yield
\[
\begin{aligned}
\lim_n u(x, {\bf M}^n)
&= \lim_n \E [U(  V^{x, B^n}_T, S^n)]\\
&\leq \E[ \limsup_n U(  V^{x, B^n}_T, S^n)]\\
&= \E [U( V^{x, B}_T, S)] \leq u(x, {\bf M}).
\end{aligned}
\]
Together with Proposition~\ref{prop:1}, this shows that the limit $B \in \cA(x, \tilde {\bf M})$ attains the value $u(x, {\bf M}) = u(x, \tilde {\bf M})$. By Remark~\ref{rem:random_max}, the strategy $\bar B(a)$ obtained by fixing $a \in [0, 1]$ from a full-measure set lies in $\cA(x, {\bf M})$ and attains $u(x, {\bf M}) = \E[U(V^{x, \bar B(a)}_T, S)]$. That completes the proof.

\section{Appendix}
\label{sec:appendix}

\subsection{An auxiliary lemma}
 
\begin{lemm}
\label{lemm:var}
Let $X = (X_t)$ be an adapted c\`adl\`ag process and $K$ be a closed proper convex cone. Then the following properties are equivalent:

$(i)$ $X$ is $K$-decreasing;

$(ii)$ $X$ is of bounded variation with the Radon--Nikodym derivative $\dot X := d X / d {\rm Var}\, X$ evolving in $-K$ where 
${\rm Var} \, X_t:=\sum_{i\le d} {\rm Var} \, X^i_t$ is the total variation of $X$ on $[0, t]$. 
\end{lemm}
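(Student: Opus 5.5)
The plan is to reduce everything to the scalar case coordinatewise and then use the properness of $K$ to compare the variation of $X$ with a single scalar functional.

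\textbf{$(ii)\Rightarrow(i)$.} This direction is immediate. If $X$ has bounded variation and $\dot X\in -K$ $d{\rm Var}\,X$-a.e., then for $s\le t$
\[
X_s-X_t=-\int_{]s,t]}\dot X_r\,d{\rm Var}\,X_r .
\]
The right-hand side lies in $K$, because $K$ is a closed convex cone and hence closed under integration against a nonnegative measure. This can be checked by pairing with every $w\in K^*$: we get $w(X_s-X_t)=\int (-w\dot X_r)\,d{\rm Var}\,X_r\ge 0$, and the bipolar theorem then gives $X_s-X_t\in K$.

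\textbf{$(i)\Rightarrow(ii)$.} The main step is showing that $X$ has bounded variation, and this is where properness is used. Since $K$ is closed, convex and proper ($K\cap(-K)=\{0\}$), the dual cone $K^*$ has nonempty interior. Fix $w\in{\rm int}\,K^*$. By compactness of $K\cap\{|x|=1\}$ there is $c>0$ with $wx\ge c|x|$ for all $x\in K$. Take a partition $s_0<\dots<s_N$ of $[0,t]$. Each increment $X_{s_{j-1}}-X_{s_j}$ lies in $K$, so
\[
\sum_j |X_{s_j}-X_{s_{j-1}}|\le c^{-1}\sum_j w(X_{s_{j-1}}-X_{s_j})=c^{-1}w(X_0-X_t).
\]
This sum is finite pathwise. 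Therefore each component $X^i$ has finite variation, and ${\rm Var}\,X_t\le d\,c^{-1}w(X_0-X_t)$.

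\textbf{Identifying the derivative.} Since every $X^i$ is absolutely continuous with respect to ${\rm Var}\,X$, the Radon--Nikodym derivative $\dot X$ exists. It remains to show $\dot X\in -K$ for $d{\rm Var}\,X$-a.e.\ $t$. Choose a countable dense subset $\{w_k\}$ of $K^*$. For each $k$, the scalar function $w_kX$ is decreasing, because $w_k(X_s-X_t)\ge 0$ for $s\le t$. Hence the measure $d(w_kX)=w_k\dot X\,d{\rm Var}\,X$ is nonpositive, which gives $w_k\dot X\le 0$ outside a ${\rm Var}\,X$-null set $N_k$. Off $\bigcup_k N_k$ we have $w\dot X\le 0$ for all $w$ in a dense subset of $K^*$, so by continuity for all $w\in K^*$. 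Thus $\dot X\in -K^{**}=-K$.

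Measurability in $(\omega,t)$ and adaptedness of $\dot X$ can be arranged by the standard optional version of the Radon--Nikodym derivative, since ${\rm Var}\,X$ is adapted and càdlàg. The only genuinely substantive step is the bounded-variation estimate, which relies on ${\rm int}\,K^*\neq\emptyset$. This in turn comes from the properness of $K$.
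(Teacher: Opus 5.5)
Your proof is correct. The identification of $\dot X$ (countable dense subset of $K^*$, union of null sets, density and the bipolar theorem) is exactly the paper's argument; the two proofs differ only in the bounded-variation step.

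\textbf{Paper's route.} The paper chooses $d$ linearly independent vectors $a^1,\dots,a^d\in K^*$. The scalar processes $Z^j=a^jX$ are then decreasing, and writing $X=\sum_j a^jC^j$ gives $C=G^{-1}Z$ with $G$ the Gram matrix. So $X$ is a fixed linear image of $d$ monotone functions.

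\textbf{Your route.} You take a single $w\in{\rm int}\,K^*$, use compactness of $K\cap\{|x|=1\}$ to get $wx\ge c|x|$ on $K$, and telescope over a partition.

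\textbf{Comparison.} Both rest on the same consequence of properness: ${\rm int}\,K^*\neq\emptyset$, or equivalently $K^*$ spans $\bbr^d$. The paper's argument is purely linear-algebraic and needs neither compactness nor strict positivity of $w$ on $K\setminus\{0\}$. Yours gives the explicit estimate ${\rm Var}\,X_t\le d\,c^{-1}w(X_0-X_t)$, controlled by one decreasing scalar process. This is the same mechanism the paper later uses in Lemma~\ref{lemma:4.2}, where the $\varepsilon$-interior condition on $Z/S$ bounds the expected total variation.

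You also prove $(ii)\Rightarrow(i)$, which the paper calls obvious. The one step you leave implicit is that $w\in{\rm int}\,K^*$ forces $wx>0$ for every $x\in K\setminus\{0\}$; this is standard and is what makes $c>0$.
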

\begin{proof}
The implication $(ii) \Rightarrow (i)$ is obvious. Let us verify that $(i) \Rightarrow (ii)$. 
As $K$ is proper, $K^*$ has a non-empty interior and, therefore, there is a point $y\in K^*$ such that  $y+e^j \in K^*$ for all vectors $e^j$ of the canonical basis.  It follows that $K^*$ contains  $d$ linear independent vectors $a^j$. 
Define the scalar processes  $Z^j: = a^j X $.
Since  $X_t - X_s \in -K$ for $ s \leq t $, we have $Z^j_t - Z^j_s \le 0$. As $Z^j$ have c\`adl\`ag paths, $Z^j$ are decreasing.  Then the coefficients $C^j_t$,  of a linear combination
\[
X_t = \sum_{j=1}^d a^j C_t^j
\]
can be obtained as linear combinations of scalar products $Z_t$:
\[
C_t = G^{-1} Z_t,
\]
where $G$ is the Gram matrix. It implies that $C^j$ are c\`adl\`ag processes of bounded variation and so is $X$. It follows that ${\rm Var} \, X$ is well-defined by Lemma I.3.3 of \cite{JS}. 

It remains to show that $\dot X \in -K$ a.e.\ with respect to $dP \otimes d{\rm Var}\, X$. Pick a countable dense subset $\{a_k\}_{k \ge 1} \subset K^*$. For each $k$, we have that
\[
a_k(X_t - X_s) = \int_{]s, t]} a_k \dot X_u \, d {\rm Var}\, X_u \le 0
\quad \text{for all } s \le t,
\]
hence $a_k \dot X \le 0$ outside a $dP \otimes d{\rm Var}\, X$-null set $N_k$. The union $N := \bigcup_k N_k$ is null, and outside $N$, $a_k \dot X \le 0$ for every $k \in \bbn$. By density of $\{a_k\}$ in $K^*$ and continuity of the map $a \mapsto a\, \dot X(\omega, t)$, we obtain $a\, \dot X \le 0$ for every $a \in K^*$ outside $N$. By the bipolar theorem, $K = K^{**}$, so $\dot X \in -K$ outside $N$. Redefining $\dot X$ as $0$ on $N$ gives the required version.
\end{proof}

Let $B$ be an $\bbr^d$-valued  ${\bf F}$-adapted c\`adl\`ag $K$-decreasing process. Due to the above lemma, $B$ is a process of bounded variation. 
Denoting by  ${\rm Var}\,B$  the increasing ${\bf F}$-adapted 
c\`adl\`ag process which is the  sum  of total variation processes ${\rm Var}\,B^i$, we put $\dot B_t:=dB_t/d{\rm Var}_tB$ an optional version of the Radon--Nikodym derivatives that exist due to Lemma I.3.13 of \cite{JS}. Again, the above line proves that $\dot B \in -K$ up to evanescence.

\subsection{Proof of Lemma \ref{lemm:ui}}

Put $p := q / (q - 1)$. Then $1/p > \gamma$. It suffices to establish that, for a fixed $x \in \operatorname{int} K$,
\[
\sup_{B \in \cA(x, \mathbf{M})} \E \left[\wp\!\left(V^{x, B}_T\right)\right]^{1/p} < \infty.
\]
Indeed, if $\eta := (1/p)/\gamma > 1$, then
\[
U^\eta(x,s) \le C^\eta 2^{\eta-1}\left(1+\wp^{1/p}(x)\right),
\]
so de la Vall\'ee-Poussin's criterion implies the uniform integrability.

Put $\Q := Z^1_T \P$ and $M := Z / Z^1$. By Lemma \ref{lemma:4.1}, $M\cdot\widehat V^{x, B}$ is a $\Q$-supermartingale. Since $M_T/S_T \in K^*$ and $(M_T/S_T)^1 = 1$, we have $M_T/S_T \in \Lambda$, hence
\[
\wp\!\left(V^{x, B}_T\right) \le \left({M_T}/{S_T}\right)V^{x, B}_T = M_T\widehat V^{x, B}_T.
\]
By H\"older's inequality,
\begin{multline*}
\E \left[ \wp\!\left(V^{x, B}_T\right) \right]^{1/p}
\leq \E \left[M_T\widehat V^{x, B}_T \right]^{1/p}
= \E_Q \left[ \left(M_T\widehat V^{x, B}_T \right)^{1/p} \left(Z^1_T \right)^{-1} \right] \leq \\
\left(\E_Q \left[ M_T\widehat V^{x, B}_T \right] \right)^{1/p}
\left( \E_Q \left[(Z^1_T)^{-q}\right] \right)^{1/q}
\leq \wp^{1/p}(x)\left( \E \left[(Z^1_T)^{1-q}\right] \right)^{1/q} < \infty.
\end{multline*}
This concludes the proof.

\end{document}